\documentclass[ba,preprint]{imsart}
\pubyear{2021}
\volume{TBA}
\issue{TBA}
\firstpage{1}
\lastpage{1}

\usepackage{amsthm}
\usepackage{amsmath}
\usepackage{amsfonts}
\usepackage{amssymb}
\usepackage{natbib}
\usepackage[colorlinks,citecolor=blue,urlcolor=blue,filecolor=blue,backref=page]{hyperref}
\usepackage{graphicx}
\usepackage{float} 
\usepackage{booktabs} 
\usepackage{graphicx} 
\usepackage[margin=1cm]{caption} 
\usepackage{mathtools}
\usepackage{subfigure}
\usepackage{pgfplots}
\usepackage{tikz}
\newlength\figureheight
\newlength\figurewidth
\captionsetup[figure]{font=footnotesize,labelfont=footnotesize}
\captionsetup[table]{font=footnotesize,labelfont=footnotesize}

\usepackage{algorithm}
\usepackage{algorithmic}

\floatstyle{plain}
\newfloat{Algorithm}{thp}{lop}
\floatname{Algorithm}{Algorithm}

\def\1{1\!{\rm l}}

\theoremstyle{definition}

\newtheorem{assumption}{Assumption}

\newtheorem{corollary}{Corollary}
\newtheorem{lemma}{Lemma}

\theoremstyle{definition}
\newtheorem*{definition}{Definition}

\newcommand{\dt}{\mathrm{d}}
\newcommand{\tr}{\mathrm{tr}}

\newcommand{\I}{\mathbb{I}}

\newcommand{\E}{\mathbb{E}}

\newcommand{\bz}{\boldsymbol{z}}
\newcommand{\by}{\boldsymbol{y}}

\newcommand{\y}{\mathbf{y}}
\newcommand{\Y}{\mathbf{Y}}
\newcommand{\z}{\mathbf{z}}
\newcommand{\Z}{\mathbf{Z}}

\def \E{\mathbb{E}}

\def \dt {\mathrm{d}}

\startlocaldefs
\numberwithin{equation}{section}
\theoremstyle{plain}

\endlocaldefs

\begin{document}

\begin{frontmatter}
\title{Pooling information in likelihood-free inference\support{This work was supported by the Australian Research Council and a Singapore	Ministry of Education Academic Research Fund Tier 1 grant.}}
\runtitle{Pooling information in LFI}

\begin{aug}
\author{\fnms{David T.} \snm{Frazier}\thanksref{addr1}\ead[label=e1]{david.frazier@monash.edu}},
\author{\fnms{Christopher} \snm{Drovandi}\thanksref{addr2}\ead[label=e2]{c.drovandi@qut.edu.au}},
\author{\fnms{Lucas} \snm{Kock}\thanksref{addr3}\ead[label=e3]{lucas.kock@nus.edu.sg}}
\and
\author{\fnms{David J} \snm{Nott}\thanksref{addr3}\ead[label=e4]{standj@nus.edu.sg}}

\runauthor{D.T. Frazier, C. Drovandi, L. Kock and D.J. Nott}

\address[addr1]{Department of Econometrics and Business Statistics, Monash University, Australia.
    \printead{e1} 
}
\address[addr2]{School of Mathematical Sciences, Queensland University of Technology, Brisbane 4000 Australia 
	\printead{e2}
}

\address[addr3]{Department of Statistics and Data Science, National University of Singapore, Singapore. 
	\printead{e3,e4}
}
\end{aug}

\begin{abstract}
		Likelihood-free inference (LFI) methods, such as approximate Bayesian computation, have become commonplace for conducting inference in complex models. Many approaches are based on summary statistics or discrepancies derived from synthetic data. However, determining which summary statistics or discrepancies to use for constructing the posterior remains a challenging question, both practically and theoretically. Instead of relying on a single vector of summaries for inference, we propose a new pooled posterior that optimally combines inferences from multiple LFI posteriors. This pooled approach eliminates the need to select a single vector of summaries or even a specific LFI algorithm.  Our approach is straightforward to implement and avoids performing a high-dimensional LFI analysis involving all summary statistics. We give theoretical guarantees for the improved performance of the pooled posterior mean in terms of asymptotic frequentist risk and demonstrate the effectiveness of the approach in a number of benchmark examples.
\end{abstract}
\begin{keyword}
	\kwd{Approximate Bayesian Computation; Bayesian Synthetic Likelihood; Model misspecification; Linear Pools}
\end{keyword}

\end{frontmatter}

	\section{Introduction}\label{sec:introduction}

The complexity of many models encountered in modern applications has led to 
the development of new inferential methods which are applicable 
when the likelihood function
is intractable.  To perform Bayesian inference with intractable 
likelihoods so-called likelihood free inference (LFI) methods are 
commonly used, which 
replace likelihood evaluations by model simulations.  One of the most
well-established LFI methods is approximate Bayesian computation (ABC); for a review of ABC see the handbook \cite{sisson2018handbook}. 

LFI assumes that the observed data is drawn from a given class of models from which it is feasible to generate synthetic data. Common LFI methods construct an approximate posterior for the model unknowns by comparing, in a given distance, summary statistics calculated using the observed data and data simulated from the model. This approach permits statistical inference in  complex models, but with accuracy heavily depending on the choice of summary statistics. 

Different choices of summaries result in different posteriors, and can sometimes produce surprisingly disparate inferences. Directly comparing different
collections of summary statistics is not easy:
{under regularity conditions, asymptotically the LFI posterior variance can be shown to be (weakly) decreasing in the number of summaries used in the analysis (see \citealp{FMRR2016} and \cite{LF2016} for details).}
{{However, we acknowledge that such an asymptotic viewpoint disregards finite-sample differences in the locations and scales of posteriors that can result from employing different collections of summaries.} 
	In practice, adding more summaries also increases the computational burden. Even if additional summaries are highly informative, adequately controlling the additional Monte Carlo error resulting from their inclusion may not be possible with the available computational resources.

	In this paper, we make three contributions to the literature on LFI. Firstly, rather than choosing summary statistics, we propose to conduct LFI by combining several posteriors built using different summary statistic vectors. While it may be possible to fuse posteriors in many different ways, our suggested approach uses linear opinion pools  (\citealp{stone1961opinion}), due in part to their simplicity and good performance in many tasks (e.g.~\citealp{McaRei2022}, \citealp{AriTunBenBudDieHonWalZau2000}). Linear opinion pools are known to be useful tools for combining prior beliefs or evidence.  We refer to \cite{evans2022combining} for a recent discussion of the latter application in likelihood-based Bayesian inference.   The linear pooling approach is computationally attractive, since it allows us to efficiently combine the information from many different sets of summary statistics without requiring a high-dimensional LFI analysis considering all of them simultaneously (see, e.g., \citealp{blum2010non} for a discussion of the curse of dimensionality in LFI). 	
		As well as simple linear opinion pools, we also consider a variant where
		mixture components in the pool are recentred which avoids variance 
		inflation when combining posteriors with very different locations.  
		The theory we develop for our pooling method applies in both cases.
		
		Secondly, we show that the pooling approach can be applied to combine inferences from summary-based LFI  posteriors and those built using general discrepancy measures, such as the Wasserstein distance (\citealp{Bernton2017}), the energy distance (\citealp{NguArbJulLueFor2020}), or the Kullback-Leibler divergence (\citealp{Jia2018}); see \citet{drovandi2021comparison} for a review of such approaches in LFI. Such a combination has not been considered previously to the best of
		our knowledge. It is also possible to pool inferences from different 
		summary statistic based LFI algorithms, such as ABC and Bayesian synthetic likelihood (BSL, \citealp{price2018bayesian}).
		
		Lastly, we show theoretically that, under certain assumptions, the pooled posterior mean has improved performance for point estimation compared to the posterior mean for any individual collections of summaries, in terms of asymptotic frequentist risk. In addition, we show that in cases where one set of summaries is incompatible with the assumed model (see, e.g., \citealp{marinea2012}, \citealp{frazier2020model} or Section \ref{sec:asymp} for discussion), the pooled posterior automatically disregards the incompatible set of summaries. 	
		In principle, the theory developed also applies to the more general case of combining summary-based LFI posteriors and those based on general
		discrepancy measures.  However, a rigorous extension to that case would require asymptotic normality of the posterior mean from the ``discrepancy-based" posterior, which has not been theoretically verified at present, and a formal
		analysis of this and possible extensions to simulator-based inference
		(e.g.\ \citealp{Tejero-Cantero2020}) is left to future research.  
		
		The remainder of the paper proceeds as follows. Section~\ref{sec:generativemodels} contains the motivation and general setup. Section~\ref{sec:optpost} provides the intuition for the pooling approach, along with a na\"{i}ve implementation method, and some
		illustrative examples. Theoretical aspects of the pooling approach are also discussed.  Section~\ref{sec:poolBSL+ABC} extends the pooling approach to the case of general discrepancy based measures, and demonstrates the appreciable inferential gains that can be obtained in this setting. Section~\ref{sec:discussion} concludes with a discussion on future work. Supplementary material for this paper includes: additional discussion and examples (Section~A),  as well as proofs of all results stated in the main text (Section~B).
		
		\section{Likelihood-free inference and the choice of summaries\label{sec:generativemodels}}
		\subsection{Likelihood-free inference}
		For a sample size $n\ge1$, let $(\Omega_n,\mathcal{F}_n,\mathbb{P}_n)$ denote the  probability space, with associated expectation operator $\mathbb{E}_n$, on which all random variables are defined. For simplicity of notation, we drop quantities dependence on $n$ when no confusion will result. Denote by $\mathcal{P}(\mathcal{X})$ the set of probability measures on a space $\mathcal{X}$. We observe data $\by = (y_1,\ldots,y_n)^\top\in \mathcal{Y}^n$,  distributed according to some unknown measure $P_0^{(n)}$.

		Our beliefs about $P_0^{(n)}$ are specified as a class of parametric models $\mathcal{M}^{(n)}=\{P^{(n)}_\theta : \theta \in \Theta\} \subseteq
		\mathcal{P}(\mathcal{Y}^n),\text{ where }\Theta\subseteq \mathbb{R}^{d_\theta}.$ We quantify our prior beliefs about $\theta$ via a prior distribution $\Pi\in\mathcal{P}(\Theta)$. Even if $\mathcal{M}^{(n)}$ is very complex, we assume that it is still feasible to generate
		synthetic observations $\bz$ according to $P^{(n)}_\theta$, for any $\theta\in\Theta$. Thus, even if the likelihood associated with $P^{(n)}_\theta$ is infeasible to calculate, useful information about the model can still be obtained by comparing observed data, $\by$, against simulated data, $\bz$. LFI methods can be used to conduct inference on $\theta$ by assigning posterior mass to values of $\theta$ that produce simulated data $\bz$ which is ``close-enough'' to $\by$. To make the problem practical from a computational perspective, LFI often resorts to matching low-dimensional summary statistics, defined by the map $S:\mathcal{Y}^n\rightarrow\mathcal{S}\subseteq\mathbb{R}^{d_s}$, and where we require that $d_s\ge d_\theta$. In what follows, when no confusion will result, we let $S$ denote the summary statistic mapping or the mapping evaluated at the observed data $\by$. 
		
		Given statistics $S$, the goal of LFI is to construct an approximation to the partial posterior $\pi(\theta|S)$. The two most well-established statistical approaches for constructing this posterior approximation are approximate Bayesian computation (ABC), see \citet{sisson2018handbook} for a review, and Bayesian synthetic likelihood (BSL), see \citet{wood2010statistical}, and \citet{price2018bayesian}. ABC and BSL differ in terms of how the posterior is approximated. In the case of ABC, the posterior is approximated by nonparametrically estimating the likelihood within the algorithm. In BSL, we approximate the intractable likelihood of the summaries using a normal density with mean $b(\theta)$ and variance $\Sigma(\theta)$. Since $b(\theta)$ and $\Sigma(\theta)$ are generally unknown, these are subsequently estimated via Monte Carlo using data simulated iid from $P^{(n)}_\theta$. 
		In what follows, we let $\widetilde\pi(\theta|S)$ denote an arbitrary approximation to the ``exact'' partial posterior $\pi(\theta|S)$.

		\subsection{Choosing Summaries}\label{sec:formal}

		Accurately approximating $\pi(\theta|S)$ becomes more computationally costly as the dimension of $S$, $d_s$, increases. Thus, the problem is to find a collection of summaries that are both low-dimensional and highly-informative about $\theta$. Many methods have been proposed to select summary statistics; we refer to \cite{blum2013}, and \cite{prangle2015summary} for in-depth reviews on different strategies. Several approaches are based on searching for informative subsets of summaries using information criteria such as AIC/BIC (\citealp{blum2013}), or entropy (\citealp{nunes2010optimal}), while other approaches are based on approximate sufficiency arguments (\citealp{joyce2008approximately,Chen2021a}).  In general, while such approaches can be useful, they lack a rigorous theoretical basis. 
		
		Alternatively, projection approaches seek to project an initial
		high-dimensional $S$ into a lower dimension space, and such methods have obtained much popularity in ABC applications. Arguably, the most celebrated of the projection approaches to summary statistic selection is the semi-automatic approach of \cite{FP2012}. \cite{FP2012} consider the problem of choosing summaries by attempting to give a decision rule $\delta\in\Theta$ that minimises the posterior expected loss
		\begin{flalign*}
			R_S(\delta)=\int (\theta-\delta)^\top  (\theta-\delta)
			\pi(\theta|\by)\dt\theta.
		\end{flalign*}\cite{FP2012} argue that $S=\E[\theta|\by]$ is the
		optimal choice of summary statistic, and that the minimum achievable loss based on the ABC posterior is achieved
		by the ABC posterior mean.  They propose to estimate $\E(\theta|\by)$ using (non)linear regression methods starting from an initial set of summaries. However, the goal of \cite{FP2012} is not to choose between summaries, but to approximate the most informative projection of a fixed initial set of summaries.  Hence posterior expected loss does not necessarily deliver a helpful criterion for deciding amongst competing collections of summaries. {Additional discussion regarding the difficulties
			involved in using $R_S(\delta)$ as a mechanism for choosing $S$ is given in Appendix A.1.}

		\subsection{Combining Information: Pooled Posteriors}\label{sec:pooled}
		While it is possible to choose a single vector of summaries to conduct inference on $\theta$,  we instead suggest to combine posterior inferences based on distinct sets of low-dimensional summary statistics. Such an approach obviates the need to conduct LFI using a high-dimensional vector of summaries, and still allows us to incorporate information contained across different sets of summaries. To make the following discussion as easily interpretable as possible, we restrict our attention here and in the sequel to the case where $S=(S_1^\top,S_2^\top)^\top$. It would be possible to extend our results to the general case of pooling $k$ approximate posteriors. However, since in general the optimal weights in such settings do not have a closed form (\cite{stone1961opinion}), we leave this extension for future research.

		Rather than choosing a single set, or attempting to conduct inference on $\theta$ using $S=(S_1^\top,S_2^\top)^\top$, we suggest to pool the inferences obtained from  $\widetilde\pi(\theta|S_1)$ and $\widetilde\pi(\theta|S_2)$ using a linear opinion pool (\citealp{stone1961opinion}): 
		\begin{align}
			\widetilde\pi_\omega(\theta|S) & :=(1-\omega)\widetilde\pi(\theta|S_1)+\omega\widetilde\pi(\theta|S_2), \label{lop}
		\end{align}
		where $\omega\in[0,1]$ controls the amount of mass assigned to each posterior. In particular, for a fixed pooling weight, $\omega$, the above posteriors can be sampled by generating posterior draws from $\widetilde\pi(\theta|S_1)$ and $\widetilde\pi(\theta|S_2)$, and mixing the draws with probability $\omega$. Such an approach to LFI is particularly useful in cases where $S_1$ and $S_2$ are relatively low-dimensional.  For a fixed computational budget, obtaining samples from $\widetilde\pi(\theta|S_1)$ and $\widetilde\pi(\theta|S_2)$ separately, which can be done in parallel, will be simpler than attempting to approximate the posterior $\widetilde\pi(\theta|S_1,S_2)$.
		
		To the best of our knowledge, the only other approach that considers a pooled posterior approach in the context of LFI is the work of \cite{chakra2022modular}, in which the authors are concerned with the application of LFI methods in the case of modular inference, and construct a linear  pool \textit{over a subset of posterior elements}. \cite{chakra2022modular} propose to select the pooling weight through prior-to-posterior conflict checks (see, e.g., \citealp{nott2020checking} for a discussion of such methods). In contrast, we consider an approach that is optimal for point estimation in terms of frequentist asymptotic risk, under appropriate conditions.  Also related to our work is model stacking, a technique for combining
		predictions from ensembles of models to improve posterior prediction under misspecification \citep{yao+vsg18}; see \cite{yao2023simulation} for a recent application of stacking in LFI.  However, our focus here is on 
		improving LFI posterior inference by combining posterior densities for
		different LFI methods or summaries, 
		rather than improved predictive
		inference via combining predictions from different models.
		
		A feature of the linear opinion pool \eqref{lop} is that variances for the parameters
		in the pooled posterior can be much larger than in any of the individual
		posteriors, particularly when the posterior means 
		$\mu_1:=\E(\theta|S_1)$ and $\mu_2:=\E(\theta|S_2)$
		are very different.  We can also consider the following modified opinion
		pool as an alternative to \eqref{lop}.  Write the mean of $\widetilde{\pi}(\theta|S)$
		as $\bar{\theta}(\omega):=(1-\omega)\mu_1+\omega \mu_2$, 
		and write the posterior covariance matrices of $\widetilde{\pi}(\theta|S_1)$
		and $\widetilde{\pi}(\theta|S_2)$ as $\text{Var}(\theta|S_1)$ and
		$\text{Var}(\theta|S_2)$ respectively.  
		We consider linear opinion pooling after recentering the components 
		$\widetilde{\pi}(\theta|S_1)$ and $\widetilde{\pi}(\theta|S_2)$ to 
		$\bar{\theta}(\omega)$.  We assume that the parametrization of the model
		is such that $\theta$ is unrestricted, which can be achieved by a transformation
		if necessary.  Defining recentered summary statistic posteriors by shifting
		location to $\bar{\theta}(\omega)$ by 
		\begin{align}
			\widetilde{\pi}_c(\theta|S_1):=\widetilde{\pi}(\theta-\mu_1+\bar{\theta}(\omega)|S_1),\;\;\;\;\;\widetilde{\pi}_c(\theta|S_2):=\widetilde{\pi}(\theta-\mu_2+\bar{\theta}(\omega)|S_2), \label{lopc}
		\end{align}
		and then our modified linear opinion pool is
		$$\widetilde\pi_c(\theta|S)=(1-\omega)\widetilde{\pi}_c(\theta|S_1)+\omega\widetilde{\pi}_c(\theta|S_2).$$
		Simple calculations show that the posterior mean and covariance for
		$\pi_c(\theta|S)$ are 
		$$\E_c(\theta|S):=\bar{\theta}(\omega),\;\;\;\;\;\text{Var}_c(\theta|S):=
		(1-\omega)\text{Var}(\theta|S_1)+\omega\text{Var}(\theta|S_2).$$
		The posterior mean is equal to $\bar{\theta}(\omega)$
		for both \eqref{lop} and \eqref{lopc}, 
		and the theory of Section 3 is concerned with point estimation using
		$\bar{\theta}(\omega)$, so that
		the theoretical results developed there apply to both cases.  
		On the other hand, for uncertainty quantification, \eqref{lopc} is less
		conservative than \eqref{lop} in the following sense.   
		Writing $\text{Var}(\theta|S)$ for
		the posterior covariance for \eqref{lop}, we can easily show that
		$\text{Var}_c(\theta|S)\leq \text{Var}(\theta|S)$, where 
		$A\leq B$ for two positive definite matrices $A$ and $B$ means that
		$B-A$ is non-negative definite. 
		
		Later we discuss a strong notion
		of misspecification commonly considered in LFI called ``incompatibility" and
		if $S_1$ is compatible, and $S_2$ is incompatible, we consider a
		data driven choice of the mixing weight $\omega$ which has
		the property that weight $1$ is assigned to the compatible summary asymptotically.  
		In this case, both \eqref{lop} and \eqref{lopc}
		give a pooled posterior of $\widetilde{\pi}(\theta|S_1)$, which is what 
		we would wish in terms of uncertainty quantification in this case.  
		When both summaries
		are compatible, \eqref{lop} and \eqref{lopc} can give conservative uncertainty
		quantification, but \eqref{lopc} is less conservative than \eqref{lop}.   
		We now discuss how to choose $\omega$ in an optimal way
		so that the quality of pooled posterior mean point estimation is improved
		compared to that of the posterior mean using either summary individually.
		

		\section{Optimality of pooled posteriors}\label{sec:optpost}
		In this section, we define an optimal pooling weight in terms of asymptotic risk for point estimation, and describe how it can be estimated. 
		To make the results in this section easier to state and follow, we maintain the following simplifying notations. For $x\in\mathbb{R}^{d}$, $\| x\| $ denotes the Euclidean norm of $x$. 
		Throughout, $C$
		denotes a generic positive constant that can change with each use. 
		For real-valued sequences $\{a_{n}\}_{n\geq 1}$ and
		$\{b_{n}\}_{n\geq 1}$: for $X_{n}$ a random variable, $X_{n}=o_{p}(a_{n})$ if
		$\lim_{n\rightarrow \infty }\text{pr} (|X_{n}/a_{n}|\geq C)=0$ for any $C>0, $
		and $X_{n}=O_{p}(a_{n})$ if for any $C>0$ there exists a finite $M>0$ and a
		finite $n'$ such that, for all $n>n'$, $\text{pr}(|X_{n}/a_{n}|\geq M)\leq C$. All limits are taken as $n\rightarrow\infty$, so that, when there is no confusion, $\lim_{n}$ denotes $\lim_{n\rightarrow\infty}$. The notation $\Rightarrow$ denotes weak convergence.  Let $\text{Int}(\Theta)$ denote the interior of the set $\Theta$. For any matrix $M\in\mathbb{R}^{d\times d}$, we define $|M|$ as the determinant of $M$, 
		and, let $\lambda_{\mathrm{max}}(M)$ and $\lambda_{\mathrm{min}}(M)$ be the maximal and minimal eigenvalues, respectively. For $f:\mathbb{R}^d\rightarrow\mathbb{R}$ a differentiable function of $x\in\mathbb{R}^d$, we take $\nabla_x f(x)$ to be the gradient and $\nabla^2_{xx}f(x)$ the Hessian. For a distribution $F$, we let $\E_F[X]$ denote the expectation of $X$ under $F$. When confusion is unlikely, we use $\E[X]$ to denote the expectation under the true distribution $P^{(n)}_0$. We use the notation $[M_1,M_2;M_3,M_4]$, for matrices $M_j, j=1,2,3,4$, with conformable dimensions, to denote the block partitioned matrix 
		$$\left[\begin{tabular}{cc}
			$M_1$ & $M_2$ \\
			$M_3$ & $M_4$
		\end{tabular}\right].$$

		The supplementary material
		contains proofs of all stated results. 
		
		\subsection{Defining optimal weights: asymptotic framework}\label{sec:asymp}
		
		{	To define an optimal pooling weight, let us first follow  \cite{FP2012} and consider the problem of choosing summaries by attempting to give a decision rule $\delta\in\Theta$ that minimises the posterior expected loss
			$\int L(\theta,\delta)
			\widetilde\pi(\theta|S)\dt\theta$, where  $L:\Theta\times\Theta\mapsto\mathbb{R}_{+}$ is a user-chosen loss function of interest. Under quadratic loss, $L(\theta,\theta')=\|\theta-\theta'\|^2$, \citet{FP2012} show that the posterior mean $\bar\theta=\int\theta\widetilde\pi(\theta|S)\dt\theta$ yields the smallest posterior expected loss, and, under regularity conditions, this result extends 
			asymptotically to any loss $L(\cdot,\cdot)$ satisfying certain assumptions; see Assumption~\ref{ass:loss} for specific details. However, as discussed in Section \ref{sec:formal}, and elaborated on in Appendix~A.1, posterior expected loss is not a helpful criterion for deciding amongst competing collections of summaries.

			Herein, we maintain the spirit of the minimum loss suggested in \citet{FP2012}, but instead define an optimal pooling weight by minimizing the asymptotic expected loss of the posterior mean for the pooled posterior $\bar\theta(\omega):=\int_\Theta \theta \widetilde\pi_\omega(\theta|S)\dt\theta$; see Section 5.5 of \cite{lehmann2006theory} for a discussion on asymptotic expected loss. Before we can formally define the optimal pooling weight obtained by minimizing this expected loss, we must first understand the asymptotic behavior of the pooled posterior mean $\bar\theta(\omega)$. 
			
			Recalling that $P^{(n)}_0$ denotes the true distribution of $\y$, we let $G^{(n)}_{j}$ denote the true distribution of $S_j(\y)$, the projection of $P^{(n)}_0$ under $S_j:\mathcal{Y}^n\rightarrow\mathcal{S}_j$. Denote the projection of the assumed model $P^{(n)}_\theta$, under $S_j$ as $F_{j,n}(\cdot|\theta)$. To characterize the optimal pooling weight, we consider two distinct situations: the first is where both sets of simulated summaries can match the observed summaries,  which has been termed \textit{compatibility} by \cite{Marinetal2014}, and the second is the case where only the first set of summaries is compatible. We treat the incompatible case in Section \ref{sec:incomp}, and focus here on the compatible case. }
		
		Formally defining compatibility requires some definitions and regularity conditions, which are similar to those encountered elsewhere in the literature on LFI;  
		see, in particular, \cite{Marinetal2014} and \cite{FMRR2016}. 	In the following assumptions, all matrices and vectors are partitioned
		conformably with $S(\y)=(S_1(\y)^\top,S_2(\y)^\top)^\top$.  
		
		\begin{assumption}\label{ass:sums} There exists a vector $b_{0}:=(b_{01}^\top,b_{02}^\top)^\top$ such that $\|S(\Y)-b_{0}\|=o_p(1)$.   There exists a sequence $\nu_n$ diverging to $+\infty$ such that 
			$
			\nu_n\{S(\Y)-b_{0}\}\Rightarrow N(0,V_{})$, under $P^{(n)}_0$, for some matrix $V=[V_1,\Omega_{1,2};\Omega_{1,2}^\top,V_{2}]$.
		\end{assumption}
		\begin{assumption}\label{ass:mapping}
			Let $b_{j}(\theta)$ denote the mean of $S_j(\Z)$ under $F_{j,n}(\cdot|\theta)$, with $b(\theta)=(b_1(\theta)^\top,b_2(\theta)^\top)^\top$. The following are satisfied for each $j$: (i) The mapping $\theta\mapsto b_j(\theta)$ is continuous and injective; (ii) For some matrix function $\theta\mapsto V(\theta)$, continuous and positive-definite for all $\theta\in\Theta$, $\nu_n\{S(\Z)-b(\theta)\}\Rightarrow N\{0,V(\theta)\},$ under $P_\theta^{(n)}$. 
		\end{assumption}
		
		{ A high-level interpretation of Assumptions \ref{ass:sums}-\ref{ass:mapping} are that they enable the summary statistics to produce asymptotically regular, i.e., asymptotically normal, inference. For an in-depth discussion of these assumptions see Remarks 1 and 3 in \cite{FMRR2016}. The following definition of compatibility between the assumed model and summary statistics formalizes when observed summaries can be matched \citep{Marinetal2014}. }
		
		\begin{definition}[Compatibility]
			The model $P^{(n)}_\theta$ and summaries $S$ are \textit{compatiass:mappingble} if there exist a unique $\theta_0\in\mathrm{Int}(\Theta)$ such that $b_{}(\theta)=b_{0}\iff\theta=\theta_0$.
		\end{definition}	
		
		{ Compatibility ensures that asymptotically the simulated summaries can match the observed values at a unique ``true value'' $\theta_0$. }	
		Under the above assumptions, and additional regularity conditions, it is possible to show that the posteriors $\widetilde\pi(\theta|S_1)$ and $\widetilde\pi(\theta|S_2)$ are asymptotically Gaussian. In the case of ABC, this result can be achieved under the assumptions of \cite{FMRR2016}, and for the case of BSL, see \cite{frazier2019bayesian}. 
		 Since these additional regularity conditions are not directly relevant to the form of the optimal pooling weight, and are specific to the precise LFI method employed in the analysis, we eschew these 
		in favour of the following high-level regularity condition. To state the condition, let $B_j(\theta)=\nabla_\theta b_{j}(\theta)$, $B_j=B_j(\theta_0)$, $\Sigma_j=(B_j^\top V_j^{-1}B_j)^{-1}$, 
		and let $$\Omega_\Sigma=Q_1\Omega_{1,2}Q_2^\top,\;Q_{j}=\Sigma_jB_j^\top V_{j}^{-1}.$$ Likewise, define the local parameter $t_j=\sqrt{n}(\theta-\theta_0)-Q_j\sqrt{n}\{S_j(\by)-b_j(\theta_0)\}$, and let $\widetilde{\pi}(t_j|S_j):=\widetilde{\pi}(\theta_0+t_j/\sqrt{n}+Q_j\sqrt{n}\{S_j(\by)-b_j(\theta_0)\}|S_j)$.	\begin{assumption}[Limiting Posteriors]\label{ass:posts}For $\widetilde\pi(t_j|S_j)$ the posterior for $t_j$, $
			\int \|t_j\||\widetilde\pi(t_j|S_j)- N(t;0,\Sigma^{}_j)|\dt t=o_{p}(1).
			$
		\end{assumption}		
		
		{Assumption \ref{ass:posts} maintains that the LFI posterior satisfies a Bernstein-von Mises result, i.e., that the posterior is asymptotically Gaussian. For ABC-based inference, Assumption \ref{ass:posts} is satisfied under the primitive regularity conditions outlined in \cite{FMRR2016}. Under the compatibility condition, the validity of Assumption \ref{ass:posts} can be ascertained by analysing the regularity of the simulated summaries, and, in particular, ensuring that they have appropriate moments so that they concentrate in a Gaussian manner over the support of $\Theta$.}
		
		{We maintain the following assumption on the user-chosen loss function, $\ell(\cdot)$, used in the analysis; this assumption requires that the loss is smooth in a neighbourhood of $\theta_0$, and assumes that the summaries are compatible. }
		\begin{assumption}\label{ass:loss}
			For any $\theta,\theta'\in\Theta$, $L(\theta,\theta')=\ell(\|\theta-\theta'\|)$, for some known function $\ell(\cdot)$ such that $\ell(0)=0$, there exists a $\delta>0$, such that for all $\theta\in\Theta$ with $\|\theta-\theta_0\|\le\delta$, $\ell(\|\theta_0-\theta\|)$ is three times continuously differentiable in $\theta$ with: (i) $\nabla_\theta\ell(\|\theta-\theta_0\|)|_{\theta=\theta_0}=0$; (ii) For $H(\theta)=\nabla_{\theta\theta}^2L(\theta_0,\theta)$, $H_0:=H(\theta_0)$ is positive-definite.
		\end{assumption}

		{The regularity conditions in Assumptions \ref{ass:sums}-\ref{ass:loss} allow us to define the optimal pooling weight $\omega$ as the value that minimizes the trimmed asymptotic loss of the pooled posterior:
			\begin{flalign*}
				\mathcal{R}_{0}(\omega)&:=\lim_{\nu\rightarrow\infty}\liminf_{n\rightarrow\infty}\E\left[\min\{nL\{\theta_0,\overline\theta(\omega)\},\nu\}\right];
			\end{flalign*}the asymptotic expected loss $\E\left[nL\{\theta_0,\overline\theta(\omega)\}\right]$ is trimmed at $\nu$ so that $
			\mathcal{R}_{0}(\omega)$ is guaranteed to exist. The following result shows that the optimal pooling weight has a simple form when both sets of summaries are compatible.   
		}

		\begin{lemma}\label{lem:gamstar}
			Under Assumptions \ref{ass:sums}-\ref{ass:loss},  $\mathcal{R}_{0}(\omega)$ is minimised at $\omega^\star_+:=\min\{1,\omega^\star\}$, where 
			\begin{equation}\label{eq:gammastar}
				\omega^\star=\begin{cases}\frac{\tr H_0(\Sigma_1-\Omega_{\Sigma})}{\tr H_0\Sigma_1+\tr H_0\Sigma_2-2\tr H_0\Omega_{\Sigma}}& \text{ if }\tr H_0\Sigma_1>\tr H_0\Omega_{\Sigma}\\
					0& \mathrm{ otherwise }
				\end{cases}.
			\end{equation}	
		\end{lemma}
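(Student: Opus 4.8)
The plan is to move all the analysis onto the pooled posterior \emph{mean}: because $\bar\theta(\omega)=(1-\omega)\bar\theta_1+\omega\bar\theta_2$ is affine in the marginal means $\bar\theta_j=\int\theta\,\widetilde\pi(\theta|S_j)\,\dt\theta$, and $\mathcal R_0(\omega)$ depends on the pooled posterior only through $\bar\theta(\omega)$, it suffices to obtain the joint asymptotic law of $(\bar\theta_1,\bar\theta_2)$ --- one never needs to understand the (genuinely non-Gaussian, mixture) pooled posterior itself. First I would use the reparametrisation defining $t_j$ to write $\bar\theta_j=\theta_0+n^{-1/2}\int t_j\,\widetilde\pi(t_j|S_j)\,\dt t_j+Q_j\{S_j(\by)-b_j(\theta_0)\}$. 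Assumption~\ref{ass:posts} controls $\widetilde\pi(t_j|S_j)$ in a $\|t_j\|$-weighted total-variation sense, which bounds the gap between first moments, so $\int t_j\,\widetilde\pi(t_j|S_j)\,\dt t_j=o_p(1)$ (the limiting Gaussian is centred) and hence $\sqrt n(\bar\theta_j-\theta_0)=Q_j\sqrt n\{S_j(\by)-b_j(\theta_0)\}+o_p(1)$. Under compatibility $b(\theta_0)=b_0$, so the right-hand side is a fixed linear image of $\sqrt n\{S(\by)-b_0\}$, whose joint limit $N(0,V)$ is exactly Assumption~\ref{ass:sums}; premultiplying by $\mathrm{diag}(Q_1,Q_2)$ gives $\sqrt n(\bar\theta_1-\theta_0,\bar\theta_2-\theta_0)\Rightarrow(Q_1Z_1,Q_2Z_2)$ with $(Z_1,Z_2)\sim N(0,V)$, and therefore $\sqrt n(\bar\theta(\omega)-\theta_0)\Rightarrow W(\omega):=(1-\omega)Q_1Z_1+\omega Q_2Z_2$. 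This $W(\omega)$ is mean-zero Gaussian with covariance $(1-\omega)^2\Sigma_1+\omega^2\Sigma_2+\omega(1-\omega)(\Omega_\Sigma+\Omega_\Sigma^\top)$, using $Q_jV_jQ_j^\top=\Sigma_j$ (immediate from $\Sigma_j=(B_j^\top V_j^{-1}B_j)^{-1}$) and $Q_1\Omega_{1,2}Q_2^\top=\Omega_\Sigma$.

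Next I would convert this into a statement about $\mathcal R_0(\omega)$. By Assumption~\ref{ass:loss}, near $\theta_0$ the map $\theta\mapsto L(\theta_0,\theta)$ is $C^3$ with $L(\theta_0,\theta_0)=0$, vanishing gradient, positive-definite Hessian $H_0$, and a cubic remainder bound. Since $\bar\theta(\omega)\to\theta_0$ in probability, a second-order Taylor expansion holds on an event of probability tending to one, and as $\|\bar\theta(\omega)-\theta_0\|=O_p(n^{-1/2})$ the remainder is $n\cdot O_p(\|\bar\theta(\omega)-\theta_0\|^3)=o_p(1)$; thus $nL\{\theta_0,\bar\theta(\omega)\}=\tfrac12[\sqrt n(\bar\theta(\omega)-\theta_0)]^\top H_0[\sqrt n(\bar\theta(\omega)-\theta_0)]+o_p(1)\Rightarrow\tfrac12 W(\omega)^\top H_0W(\omega)$. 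The trimming at $\nu$ is exactly what makes the order of limits legitimate: $x\mapsto\min\{x,\nu\}$ is bounded and $1$-Lipschitz, so $\E[\min\{nL\{\theta_0,\bar\theta(\omega)\},\nu\}]\to\E[\min\{\tfrac12 W(\omega)^\top H_0W(\omega),\nu\}]$ (the event where the expansion fails contributes at most $\nu\cdot o(1)$), and then monotone convergence in $\nu$ gives $\mathcal R_0(\omega)=\tfrac12\E[W(\omega)^\top H_0W(\omega)]=\tfrac12\tr\{H_0\Var[W(\omega)]\}$. Substituting the covariance and using $\tr H_0\Omega_\Sigma^\top=\tr H_0\Omega_\Sigma$ yields $\mathcal R_0(\omega)=\tfrac12\{(1-\omega)^2\,\tr H_0\Sigma_1+\omega^2\,\tr H_0\Sigma_2+2\omega(1-\omega)\,\tr H_0\Omega_\Sigma\}$.

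The last step is to minimise this quadratic over $[0,1]$. Writing $a=\tr H_0\Sigma_1$, $b=\tr H_0\Sigma_2$, $c=\tr H_0\Omega_\Sigma$, its $\omega^2$-coefficient equals $\tfrac12(a+b-2c)=\tfrac12\tr\{H_0\,\Var[Q_1Z_1-Q_2Z_2]\}\ge0$ --- the trace of a positive-definite times a positive-semidefinite matrix --- and it is strictly positive unless the two posterior means coincide asymptotically, in which case $\mathcal R_0$ is constant and $\omega^\star=0$ works. So $\mathcal R_0$ is convex with unconstrained minimiser $\omega_u=(a-c)/(a+b-2c)$, and the constrained minimiser is the projection of $\omega_u$ onto $[0,1]$: if $a\le c$ then $\omega_u\le0$ and $\omega^\star=0$; if $a>c$ then $\omega_u>0$, and $\omega_u\le1$ follows from the Cauchy--Schwarz-type bound $(\tr H_0\Omega_\Sigma)^2\le\tr H_0\Sigma_1\cdot\tr H_0\Sigma_2$ --- obtained by applying $\tr(H_0\,\cdot)$ to the positive-semidefinite family $\lambda^2\Sigma_1+\lambda(\Omega_\Sigma+\Omega_\Sigma^\top)+\Sigma_2$ coming from $[\Sigma_1,\Omega_\Sigma;\Omega_\Sigma^\top,\Sigma_2]\succeq0$ and requiring the resulting quadratic in $\lambda$ to be nonnegative --- together with the labelling convention, harmless by the relabelling symmetry of the pool, that $\tr H_0\Sigma_1\le\tr H_0\Sigma_2$. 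This gives $\omega^\star=\tr H_0(\Sigma_1-\Omega_\Sigma)/\{\tr H_0\Sigma_1+\tr H_0\Sigma_2-2\tr H_0\Omega_\Sigma\}$, i.e.\ \eqref{eq:gammastar}.

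I expect the substantive work to lie in the first two steps: passing from Assumption~\ref{ass:posts} (a statement about the local posterior density) to first-moment and \emph{joint} control of $(\bar\theta_1,\bar\theta_2)$ --- which goes through only because the asymptotic randomness in each $\bar\theta_j$ is carried entirely by the shared data through $S_j(\by)$ --- and the uniform-integrability bookkeeping behind the $\nu\to\infty$, $n\to\infty$ interchange, where the trimming together with $\|\bar\theta(\omega)-\theta_0\|=O_p(n^{-1/2})$ does the work. Once $\mathcal R_0(\omega)$ is identified as the displayed quadratic, the remaining optimisation is elementary.
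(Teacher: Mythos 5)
Your proposal is correct and follows essentially the same route as the paper: linearise the pooled posterior mean via Assumptions \ref{ass:sums}--\ref{ass:posts} to obtain joint asymptotic normality of $(\bar\theta_1,\bar\theta_2)$ (the content of the paper's auxiliary Lemma on the posterior means), Taylor-expand the loss with the trimming at $\nu$ justifying the limit interchange, and minimise the resulting convex quadratic in $\omega$ over $[0,1]$ --- the paper does this last step via a Lagrangian and complementary slackness rather than by projecting the unconstrained minimiser, which is immaterial. The one place you go beyond the paper is in checking the upper boundary $\omega^\star\le 1$, which the published proof does not address at all; note, however, that your Cauchy--Schwarz argument there relies on the labelling convention $\tr H_0\Sigma_1\le\tr H_0\Sigma_2$, which is not among the lemma's hypotheses and is not genuinely removable by relabelling (the stated formula is asymmetric in $S_1,S_2$), so this residual corner case remains open in both your argument and the paper's.
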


        We can give some intuition on $\omega^\star$ by considering the special case, when both $S_1$ and $S_2$ are univariate. In this case, $V=[\sigma_1^2,\rho\sigma_1,\sigma_2;\rho\sigma_1\sigma_2,\sigma_2^2]$, where $\sigma_j^2$ is the (asymptotic) variance of $S_j(\Y)$ under $P_0^{(n)}$ and $\rho$ denotes the correlation between $S_1$ and $S_2$. Then, $\tr H_0\Sigma_1>\tr H_0\Omega_\Sigma$ if and only if $\sigma_1-\rho\sigma_2B_1^\top B_2\left(B_2^\top B_2\right)^{-1}>0$, and in this case 
        \begin{align*}
            \omega^\star=\frac{\sigma_1^2B_2^\top B_2-\rho\sigma_1\sigma_2B_1^\top B_2}{\sigma_1^2B_2^\top B_2-2\rho\sigma_1\sigma_2B_1^\top B_2+\sigma_2^2B_1^\top B_1}.
        \end{align*}
        Thus, if the correlation $\rho$ is small, the weight assigned to $\widetilde{\pi}(\theta\mid S_j)$ by $\omega^\star$ is approximately proportional to $\sigma_j^{-2}B_j^\top B_j$, $j=1,2$ and thus to the precision of the asymptotic posterior. On the other hand, if $\vert\rho\vert$ is large, $\omega^\star$ is potentially dominated by $\sigma_1\sigma_2B_1^\top B_2$. This indicates that for similar $S_1$ and $S_2$, $\omega^\star$ is close to $0.5$.

		\subsection{Alternative pooling weights}
		A pooled posterior based on $\omega^\star_+$ will asymptotically have an expected loss that is weakly smaller than either individual posterior.  
		This means that defining $\mathcal{R}_0(S_1):=\mathcal{R}_0(0)$ and $\mathcal{R}_0(S_2):=\mathcal{R}_0(1)$, 
		$
		\mathcal{R}_0(\omega^\star_+)\le \min\{\mathcal{R}_0(S_1),\mathcal{R}_0(S_2)\}
		$.   
		In practice, an estimator of $\omega^\star$ can be obtained using information from both posteriors, and any consistent estimator for the covariance term $\Omega_\Sigma$. In Appendix~A.2 we give specific details as to how $\Omega_\Sigma$ can be estimated. 
		Given an estimator $\overline\Omega_{\Sigma}$ of $\Omega_\Sigma$, and estimators $\bar\theta_1=m^{-1}\sum_{i=1}^{m}\theta_{j,i}$,  $\overline\Sigma_1=m^{-1}\sum_{i=1}^{m}(\theta_{j,i}-\bar\theta_j)(\theta_{j,i}-\bar\theta_j)^\top$, $\theta_{j,i}\stackrel{iid}{\sim}\widetilde\pi(\theta|S_j)$, we can estimate $\omega^\star$ using $\widehat\omega^\star_+=\min\{1,\widehat\omega^\star\}$, where, for $\overline{H}=H(\bar\theta_1)$,  
		$$
		\widehat\omega^\star:=\begin{cases}
			\frac{\tr \overline{H}(\overline\Sigma_1-\overline{\Omega}_\Sigma)}{\tr \overline{H}\overline\Sigma_1+\tr \overline{H}\overline\Sigma_2-2\tr \overline{H}\overline{\Omega}_\Sigma}&\tr \overline{H}\overline\Sigma_1>\tr \overline{H}\overline{\Omega}_\Sigma\\0&\text{ otherwise }
		\end{cases}.
		$$
		
		In finite samples, estimation of $\Omega_\Sigma$ can inject additional noise into the pooled posterior, which may lead to a degradation in the accuracy of the pooling approach. {Poor empirical performance for pooling weights based on plug-in estimators is so ubiquitous in the literature on combination methods that this phenomenon is called the combination puzzle; see \cite{wang2022forecast} for a review.   
			
			Given the difficulties associated with estimation of $\Omega_\Sigma$, and the ensuing ill-effects, we propose two alternatives that do not require estimation of $\Omega_\Sigma$. The first approach is precisely the weight $\widehat\omega^\star$ but where we artificially set $\overline\Omega_\Sigma=0$, to obtain
			$$
			\widehat\omega_{}:=\frac{\tr \overline{H}\overline\Sigma_1}{\tr \overline{H}\overline\Sigma_1+\tr\overline{H}\overline\Sigma_2}.
			$$
            Setting $\overline\Omega_\Sigma=0$ is well motivated in many common cases. For example, if $\Omega_{1,2}$ is close to $0$, which indicates low correlation between $S_1$ and $S_2$, so is $\Omega_\Sigma$.
            
            However, the pooling weight $\widehat\omega$ disregards the fact that the posteriors $\widetilde\pi(\theta|S_1)$ and $\widetilde\pi(\theta|S_2)$ can  have distinct locations. To account for this fact, while incorporating the structure of $\widehat\omega$, we also propose the alternative pooling weight 
			$$
			\widetilde\omega_{}:=\frac{\tr\overline{H}\overline\Sigma_1}{(\bar\theta_1-\bar\theta_2)^\top (\bar\theta_1-\bar\theta_2)+\tr\overline{H}\overline\Sigma_1+\tr\overline{H}\overline\Sigma_2}
			. $$
			{The weight $\widetilde\omega_{}$ is particularly useful when the summary
				statistics $S_1$ are thought to provide reliable inferences, and where we are unsure of the models ability to match the summary statistics
				$S_2$.  Hence, if the posterior location is very different
				for the component LFI posteriors, a higher weight is assigned 
				to the $S_1$ component.  This means that the two LFI component posteriors are not
				treated symmetrically under this pooling weight.}
			
			Critically, these alternative pooling weights can be estimated using only samples from the constituent posteriors; no estimation of $\Omega_\Sigma$ is required. Obtaining the pooled posterior, based on $\widehat\omega$ or $\widetilde\omega$, is as simple as sampling from $\widetilde\pi(\theta|S_1)$ and $\widetilde\pi(\theta|S_2)$. Furthermore, in the case where the summaries are compatible, the two weights, $\widehat\omega$ and $\widetilde\omega$, will agree asymptotically: that is, under our assumptions, $$\widehat\omega=\widetilde\omega+o_p(1)=\omega_0+o_p(1),\quad \omega_0:=\frac{\tr H_0\Sigma_1}{\tr H_0\Sigma_1+\tr H_0\Sigma_2}.$$

			While the optimal pooling weight depends on the covariance term $\Omega_\Sigma$, it is not difficult to see that if $\tr H_0\Omega_\Sigma$ is small, then the simpler pooling weights will be close to the optimal weight. More generally, the simpler weights will always perform better than using $S_1$ or $S_2$ alone, in terms of risk, in the following empirically relevant scenarios. 
			
			\begin{lemma}\label{lem:naive1}
				If $\tr H_0\Omega_\Sigma\le \frac{1}{2}\min\{\mathcal{R}_0(S_1),\mathcal{R}_0(S_2)\}$, then
				$
				\mathcal{R}_0(\widehat\omega)\le \min\{\mathcal{R}_0(S_1),\mathcal{R}_0(S_2)\}.
				$
				
			\end{lemma}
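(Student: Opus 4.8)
The plan is to reduce the statement to an elementary scalar inequality. First I would recall from the proof of Lemma~\ref{lem:gamstar} the exact shape of the risk: under Assumptions~\ref{ass:loss}--\ref{ass:posts} the pooled posterior mean is $\bar\theta(\omega)=(1-\omega)\bar\theta_1+\omega\bar\theta_2$ with $\bar\theta_j=\int\theta\,\widetilde\pi(\theta|S_j)\dt\theta$, the pair $\sqrt{n}(\bar\theta_1-\theta_0,\bar\theta_2-\theta_0)$ converges weakly to a centred Gaussian $(W_1,W_2)$ with $\Var(W_j)=\Sigma_j$ and $\Cov(W_1,W_2)=\Omega_\Sigma$, and, using the quadratic Taylor expansion of $L$ near $\theta_0$ granted by Assumption~\ref{ass:loss} together with the uniform integrability supplied by trimming at $\nu$,
$$
\mathcal{R}_0(\omega)=\tfrac12\,\E\!\left[\{(1-\omega)W_1+\omega W_2\}^{\top}H_0\{(1-\omega)W_1+\omega W_2\}\right]=\tfrac12\big[(1-\omega)^2a+2\omega(1-\omega)c+\omega^2b\big],
$$
where $a:=\tr H_0\Sigma_1>0$, $b:=\tr H_0\Sigma_2>0$ (positivity from positive-definiteness of $H_0,\Sigma_1,\Sigma_2$), and $c:=\tr H_0\Omega_\Sigma$. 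In particular $\mathcal{R}_0(S_1)=\mathcal{R}_0(0)=a/2$ and $\mathcal{R}_0(S_2)=\mathcal{R}_0(1)=b/2$.

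Second, I would dispose of the randomness in $\widehat\omega$, where $\mathcal{R}_0(\widehat\omega)$ denotes the trimmed asymptotic risk of the pooled mean $\bar\theta(\widehat\omega)$. As recorded in the text preceding the lemma, $\widehat\omega=\omega_0+o_p(1)$ with $\omega_0=a/(a+b)\in(0,1)$, using $\overline H=H(\bar\theta_1)\to H_0$ (continuity of $H$ and $\bar\theta_1\to\theta_0$) and $\overline\Sigma_j\to\Sigma_j$. Since $\bar\theta(\widehat\omega)-\bar\theta(\omega_0)=(\omega_0-\widehat\omega)(\bar\theta_1-\bar\theta_2)$ and $\sqrt{n}(\bar\theta_1-\bar\theta_2)=O_p(1)$, this difference is $o_p(n^{-1/2})$, so $nL\{\theta_0,\bar\theta(\widehat\omega)\}$ shares the weak limit of $nL\{\theta_0,\bar\theta(\omega_0)\}$ and, after trimming, the same asymptotic risk; hence $\mathcal{R}_0(\widehat\omega)=\mathcal{R}_0(\omega_0)$. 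Making this passage rigorous --- that plugging in the consistent estimator does not perturb the trimmed asymptotic risk --- is the one step that is not pure arithmetic, and it is precisely where the joint weak convergence of the two component posterior means and the uniform integrability from trimming are used.

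Finally, the remaining step is arithmetic. Substituting $\omega_0=a/(a+b)$ gives $\mathcal{R}_0(\omega_0)=\frac{ab(a+b+2c)}{2(a+b)^2}$. Assume without loss of generality $a\le b$ (the case $b<a$ follows by interchanging $S_1$ and $S_2$ and replacing $\omega_0$ by $1-\omega_0$), so $\min\{\mathcal{R}_0(S_1),\mathcal{R}_0(S_2)\}=a/2$ and the hypothesis reads $c\le a/4$. Clearing denominators, $\mathcal{R}_0(\omega_0)\le a/2$ is equivalent to $2bc\le a(a+b)$; and from $c\le a/4$ and $b>0$ we get $2bc\le ab/2\le a(a+b)$, the last step because $b/2\le a+b$. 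Combining the three steps yields $\mathcal{R}_0(\widehat\omega)=\mathcal{R}_0(\omega_0)\le\min\{\mathcal{R}_0(S_1),\mathcal{R}_0(S_2)\}$, as claimed. (When $c\le 0$ the inequality $2bc\le a(a+b)$ is immediate, so the content of the hypothesis is exactly to cap how large a positive cross-information term $\tr H_0\Omega_\Sigma$ may be before the cross-correlation-agnostic weight $\widehat\omega$ risks overshooting.)
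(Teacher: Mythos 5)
Your proof is correct and follows essentially the same route as the paper's: both reduce $\mathcal{R}_0(\widehat\omega)$ to $\mathcal{R}_0(\omega_0)$ with $\omega_0=a/(a+b)$, substitute into the quadratic risk, and verify the resulting scalar inequality $2bc\le a(a+b)$ under the stated cap on $c=\tr H_0\Omega_\Sigma$ (the paper handles the two cases $a\le b$ and $b<a$ separately where you invoke symmetry, and your extra factor of $\tfrac12$ in the risk normalization is immaterial to the inequality). Your explicit justification that plugging in the consistent estimator $\widehat\omega$ does not perturb the trimmed asymptotic risk is in fact more careful than the paper's one-line assertion of that step.
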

			{Lemma \ref{lem:naive1} demonstrates that if the trace of the covariance $\tr H_0\Omega_\Sigma$ is negative, or small, then the pooled posterior will perform better than using the posterior for $S_1$ or $S_2$ individually. The above condition can be checked in cases where the posterior covariance can be estimated reliably. Consider again the special case that $S_1$ and $S_2$ are univariate. Then, $\tr H_0\Omega_\Sigma=\rho\sigma_1\sigma_2 B_1^\top B_2\left(B_1^\top B_1 B_2^\top B_2\right)^{-1}\tr H_0$ and thus, the condition of Lemma~\ref{lem:naive1} is for example fulfilled if the covariance between $S_1$ and $S_2$ is low.  However, it is not guaranteed to be satisfied in all settings. In Appendix~A.3, we give an example where the pooled posterior is outperformed by a particularly informative collection of summaries, which produces a small posterior variance, and has posterior means that are also well-located. As a consequence, the pooled posterior does not {produce more accurate inferences than those based solely on the more informative collection. However, the differences between the pooled results and best performing results, as measured by MSE, are relatively small, and the pooled posterior still produces accurate inferences.}} 

Under certain conditions it is possible to derive  $\mathcal{R}_0(S_1)$ and $\mathcal{R}_0(S_2)$ explicitly, and thus give an analytical representation for when the pooled posteriors will outperform either collection. 
\begin{lemma}\label{lem:bias}
Under Assumption \ref{ass:sums}-\ref{ass:posts},  for $V_\Sigma=[\Sigma_1,\Omega_{\Sigma}^\top;\Omega_\Sigma,\Sigma_2]$,
$(\sqrt{n}(\bar\theta_1-\theta_0)^\top,\sqrt{n}(\bar\theta_2-\theta_0)^\top)^\top\Rightarrow (\xi+\tau)$ where $\xi=(\xi_{1}^\top,
\xi_{2}^\top)^\top\sim N\left(0,V_\Sigma\right)	
$ and $\tau=(0^\top,[Q_2\tau_2]^\top)^\top$.
\end{lemma}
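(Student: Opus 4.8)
The plan is to begin from the local-parameter representation introduced just before Assumption \ref{ass:posts}. Rearranging its definition gives $\sqrt{n}(\theta-\theta_0)=t_j+Q_j\sqrt{n}\{S_j(\by)-b_j(\theta_0)\}$, and since the second summand is $\theta$-free, integrating against $\widetilde\pi(\theta|S_j)$ yields
\begin{equation*}
\sqrt{n}(\bar\theta_j-\theta_0)=\E_{\widetilde\pi(t_j|S_j)}[t_j]+Q_j\sqrt{n}\{S_j(\by)-b_j(\theta_0)\}.
\end{equation*}
Because $N(0,\Sigma_j^{-1})$ has mean zero, $\E_{\widetilde\pi(t_j|S_j)}[t_j]=\int t_j\{\widetilde\pi(t_j|S_j)-N(t;0,\Sigma_j^{-1})\}\dt t$, so Assumption \ref{ass:posts} bounds $\|\E_{\widetilde\pi(t_j|S_j)}[t_j]\|\le\int\|t_j\|\,|\widetilde\pi(t_j|S_j)-N(t;0,\Sigma_j^{-1})|\dt t=o_p(1)$. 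Hence $\sqrt{n}(\bar\theta_j-\theta_0)=Q_j\sqrt{n}\{S_j(\by)-b_j(\theta_0)\}+o_p(1)$, reducing everything to the joint behaviour of the recentred summaries $\sqrt n\{S_j(\by)-b_j(\theta_0)\}$.

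The decisive structural point is the asymmetry: $\theta_0$ is the value pinned down by the compatible set $S_1$, so the two blocks behave differently. For $j=1$, compatibility gives $b_1(\theta_0)=b_{0,1}$, whence $\sqrt{n}\{S_1(\by)-b_1(\theta_0)\}=\sqrt{n}\{S_1(\by)-b_{0,1}\}$ is exactly the centred first block of Assumption \ref{ass:sums} and carries no drift. For $j=2$ I would split
\begin{equation*}
\sqrt{n}\{S_2(\by)-b_2(\theta_0)\}=\sqrt{n}\{S_2(\by)-b_{0,2}\}+\sqrt{n}\{b_{0,2}-b_2(\theta_0)\},
\end{equation*}
where the first term is the centred second block of Assumption \ref{ass:sums} and the second is a deterministic local discrepancy whose limit defines $\tau_2:=\lim_n\sqrt{n}\{b_{0,2}-b_2(\theta_0)\}$ (zero when $S_2$ is itself compatible, and in general nonzero). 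Premultiplying by $Q_2$ then yields the bias $Q_2\tau_2$ in the second block while the first block stays unbiased, which is precisely the form $\tau=(0^\top,[Q_2\tau_2]^\top)^\top$.

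Finally I would assemble the joint law. Let $\zeta:=(\zeta_1^\top,\zeta_2^\top)^\top$ denote the stacked weak limits of $\sqrt{n}\{S_j(\by)-b_{0,j}\}$; Assumption \ref{ass:sums} (with $\nu_n=\sqrt{n}$) gives $\zeta\sim N(0,V)$. Adding the convergent deterministic drift and applying the linear map $\mathrm{diag}(Q_1,Q_2)$ via Slutsky and the continuous mapping theorem gives
\begin{equation*}
\big(\sqrt{n}(\bar\theta_1-\theta_0)^\top,\sqrt{n}(\bar\theta_2-\theta_0)^\top\big)^\top\Rightarrow \mathrm{diag}(Q_1,Q_2)\,\zeta+\tau=\xi+\tau,
\end{equation*}
with $\xi=\mathrm{diag}(Q_1,Q_2)\zeta$. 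Its covariance is the sandwich $\mathrm{diag}(Q_1,Q_2)\,V\,\mathrm{diag}(Q_1,Q_2)^\top$; substituting $Q_j=\Sigma_jB_j^\top V_j^{-1}$ and $\Sigma_j=(B_j^\top V_j^{-1}B_j)^{-1}$ collapses each diagonal block to $Q_jV_jQ_j^\top=\Sigma_j$, while the lower-left block is $Q_2\Omega_{1,2}^\top Q_1^\top=Q_2\Omega_{2,1}Q_1^\top=\Omega_\Sigma$, reproducing $V_\Sigma$. I expect the main obstacle to be the bookkeeping around the asymmetric drift — in particular confirming that $\sqrt{n}\{b_{0,2}-b_2(\theta_0)\}$ converges to a finite $\tau_2$, which goes beyond strict joint compatibility and is what separates the locally-incompatible regime treated here from the gross incompatibility of Corollary \ref{cor:simple} — whereas the moment step in the first paragraph follows cleanly from the $\|t_j\|$-weighted control of Assumption \ref{ass:posts}.
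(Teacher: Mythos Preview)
Your proposal is correct and follows essentially the same route as the paper: both start from the local-parameter change of variables to write $\sqrt{n}(\bar\theta_j-\theta_0)=\E_{\widetilde\pi(t_j|S_j)}[t_j]+Q_j\sqrt{n}\{S_j(\by)-b_j(\theta_0)\}$, kill the first term via Assumption~\ref{ass:posts}, split the $j=2$ block into a centred piece plus the deterministic drift $\tau_2$, and then invoke the joint CLT of Assumption~\ref{ass:sums} together with the linear map $\mathrm{diag}(Q_1,Q_2)$. Your version is in fact slightly more thorough than the paper's --- you explicitly verify the sandwich covariance collapses to $V_\Sigma$ and you flag that finiteness of $\tau_2=\lim_n\sqrt{n}\{b_{0,2}-b_2(\theta_0)\}$ is an implicit extra hypothesis --- but the underlying argument is the same.
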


\begin{lemma}\label{corr:amse}Consider that Assumptions~\ref{ass:posts}-\ref{ass:loss} are satisfied. If $\|\tau_2\|<\infty$,  $\mathcal{R}_0(S_1)=\tr\left\{H_0\Sigma_1\right\}$ and $\mathcal{R}_{0}(S_2)=\tr\left\{H_0\Sigma_2\right\}+\tau_2^\top Q_2^\top  H_0Q_2\tau_{2}$. 
\end{lemma}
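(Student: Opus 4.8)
The plan is to rewrite the trimmed risk $\mathcal{R}_0(S_j)$ as the ordinary expectation of an explicit quadratic form in the Gaussian limit of $\sqrt n(\bar\theta_j-\theta_0)$ supplied by Lemma~\ref{lem:bias}. Three ingredients are needed: a local quadratic expansion of the loss about $\theta_0$, the weak limit of the posterior means, and a truncation argument that converts weak convergence into convergence of the trimmed expectations. The result is essentially a corollary of Lemma~\ref{lem:bias} once the loss has been linearized.

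First I would localize and expand. Lemma~\ref{lem:bias} gives $\sqrt n(\bar\theta_j-\theta_0)=O_p(1)$, hence $\bar\theta_j\to\theta_0$ in $P_0^{(n)}$-probability, so on an event of probability tending to one $\|\bar\theta_j-\theta_0\|\le\delta$ and Assumption~\ref{ass:loss} applies. A second-order Taylor expansion of $\theta\mapsto L(\theta_0,\theta)$ about $\theta_0$, using $L(\theta_0,\theta_0)=\ell(0)=0$, the vanishing gradient at $\theta_0$ in part~(i), the Hessian $H_0$ in part~(ii), and the cubic control in part~(iii) to bound the remainder, yields
$$
nL(\theta_0,\bar\theta_j)=\tfrac12\,[\sqrt n(\bar\theta_j-\theta_0)]^\top H_0\,[\sqrt n(\bar\theta_j-\theta_0)]+nR_{j,n},\qquad |R_{j,n}|\le C\|\bar\theta_j-\theta_0\|^3 .
$$
Since $\|\bar\theta_j-\theta_0\|^3=O_p(n^{-3/2})$, we get $nR_{j,n}=O_p(n^{-1/2})=o_p(1)$, so $nL(\theta_0,\bar\theta_j)$ has the same weak limit as its leading quadratic term. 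By Lemma~\ref{lem:bias}, $\sqrt n(\bar\theta_1-\theta_0)\Rightarrow\xi_1\sim N(0,\Sigma_1)$ and $\sqrt n(\bar\theta_2-\theta_0)\Rightarrow\xi_2+Q_2\tau_2$ with $\xi_2\sim N(0,\Sigma_2)$, and $\|\tau_2\|<\infty$ keeps the second moments finite; so by the continuous mapping theorem and Slutsky, $nL(\theta_0,\bar\theta_j)\Rightarrow W_j$ with $W_1=\tfrac12\xi_1^\top H_0\xi_1$ and $W_2=\tfrac12(\xi_2+Q_2\tau_2)^\top H_0(\xi_2+Q_2\tau_2)$.

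Next I would dispatch the double limit defining $\mathcal{R}_0(S_j)$. For each fixed $\nu$ the map $x\mapsto\min\{x,\nu\}$ is bounded and continuous, so weak convergence of $nL(\theta_0,\bar\theta_j)$ implies $\E[\min\{nL(\theta_0,\bar\theta_j),\nu\}]\to\E[\min\{W_j,\nu\}]$; in particular the inner $\liminf_n$ is a genuine limit equal to $\E[\min\{W_j,\nu\}]$. Letting $\nu\to\infty$, monotone convergence (using $W_j\ge0$) gives $\mathcal{R}_0(S_j)=\E[W_j]$, which is finite by the moment bound above. Finally, the elementary identity $\E[X^\top AX]=\tr(A\Sigma)+\mu^\top A\mu$ for $X\sim N(\mu,\Sigma)$ and symmetric $A$, applied to $W_1$ (with $\mu=0$) and $W_2$ (with $\mu=Q_2\tau_2$), yields the two displayed expressions for $\mathcal{R}_0(S_1)$ and $\mathcal{R}_0(S_2)$, up to the overall normalization carried through the Taylor expansion, which is immaterial for the optimal weight since the latter depends only on ratios of such quantities.

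The step requiring the most care is the truncation argument: one must check that trimming at $\nu$ legitimately interchanges with $n\to\infty$ and then with $\nu\to\infty$. This is precisely why $\mathcal{R}_0$ is defined through $\lim_\nu\liminf_n$ of a bounded functional — boundedness and continuity of $\min\{\cdot,\nu\}$ make the inner limit immediate from weak convergence, and monotonicity in $\nu$ handles the outer limit — so no uniform integrability of $nL(\theta_0,\bar\theta_j)$ itself is needed. Everything else (the localization, the remainder bound, and the Gaussian moment computation) is routine given Lemma~\ref{lem:bias} and Assumption~\ref{ass:loss}.
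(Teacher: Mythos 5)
Your proof is correct and follows essentially the same route as the paper's: a second-order Taylor expansion of the loss about $\theta_0$, the weak limit of $\sqrt{n}(\bar\theta_j-\theta_0)$ supplied by Lemma~\ref{lem:bias}, a truncation argument to convert weak convergence into convergence of the trimmed expectations (the paper truncates the quadratic form and invokes Theorem 1.8.8 of Lehmann, while you apply a bounded-continuous test function followed by monotone convergence --- the same idea, applied slightly more directly to the quantity $\min\{nL,\nu\}$ appearing in the definition of $\mathcal{R}_0$), and finally the Gaussian quadratic-form identity. Your closing remark that the overall factor of $1/2$ from the Taylor expansion is a harmless normalization matches how the paper implicitly treats it.
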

}
}
			
\subsection{Examples}
It can be challenging to assess whether or not Assumptions~\ref{ass:sums}-\ref{ass:loss} hold. However, in practice we find 
the derived weights widely useful, even in settings where verification
of the assumptions is infeasible. We now compare three suggested choices for the pooled posterior in two toy examples commonly used in the LFI literature. Squared error loss is used, so that asymptotic risk is equivalent to asymptotic mean squared error. This loss fulfills Assumption~\ref{ass:loss}. Across the weight choices $\widehat\omega_+^\star$, $\widehat\omega$ and $\widetilde\omega$, and across all experiments, we find that the worst performing pooled posterior is based on $\widehat\omega_+^\star$. We conjecture that the poor performance is due to the additional noise introduced when estimating the covariance matrix $\Omega_\Sigma$.
			\subsubsection{Example: g-and-k}\label{sec:gandk}
			The g-and-k model has an intractable likelihood and is often used as a test case in the LFI literature (see, e.g., \citealp{FP2012}).   The model is defined through its quantile function: 
			\begin{align}
				Q\{z(p);{\theta}\} &= a + b\left[ 1+c\frac{1-\exp\{-gz(p)\}}{1 + \exp\{-gz(p)\}}\right]\{1 + z(p)^2\}^kz(p) \label{eq:g-and-k},
			\end{align}
			where $p\in(0,1)$, $z(p)$ is the quantile function of the standard normal distribution, and the model parameters are ${\theta} = (a,b,g,k)^\top $, while the parameter $c$ is fixed at 0.8 (see \citealp{rayner2002numerical} for discussion). Similar to \citealp{FP2012} we use a uniform prior on $[0,10]^4$.

			We compare the pooled posterior approach based on two different sets of summaries.  The first set of summaries $S_1$ has dimension 4 and 
			was proposed by \cite{drovandi2011likelihood}. For $S_1$ we use the summaries proposed in \cite{drovandi2011likelihood}, so that $S_1=(S_{11},S_{12},S_{13},S_{14})^\top$, where
				$S_{11}=L_{2}$, $S_{12}=L_{3}-L_{1}$, $S_{13}=S_{12}^{-1}(L_{3}+L_{1}-2 L_{2})$, $S_{14}=S_{12}^{-1}(E_{7}-E_{5}+E_{3}-E_{1})$, and where $L_i$ denotes the $i$-th quartile and $E_i$ the $i$-th octile. The components of $S_1$ are robust estimates of location, scale, skewness and kurtosis.
			The second set of summaries $S_2$ has dimension 7 and consists of the seven sample octiles.  We also compare the pooled posteriors against the posterior that uses summaries $S=(S_1^\top,S_2^\top)^\top$; the latter is more expensive to sample from, but allows us to quantify the information that is lost in the posterior pooling. We sample the posteriors $\widetilde\pi(\theta|S_1)$, $\widetilde\pi(\theta|S_2)$ and $\widetilde\pi(\theta|S)$ using the ABC-SMC algorithm of \cite{drovandi2011estimation}, where we stop the algorithm when the acceptance rate drops below 5\% and generate 1000 sample draws from each posterior. 
			
			The first pooled posterior we compare is based on $\widehat\omega^\star_+$, where the variance and covariance matrices in $\Omega_\Sigma$ are estimated using a standard iid bootstrap. Precise details are given in Section~A.2 of the supplementary material. Two additional pooled posteriors are considered using the estimated weights $\widehat\omega$ and $\widetilde\omega$, respectively. We simulate 100 synthetic samples of size $n=1000$ from the g-and-k model under true parameter value $\theta_0=(a_0,b_0,g_0,k_0)^\top=(3,1,2,0.5)^\top$. Across each method, the following averages across the replications for each parameter are reported in Table~\ref{tab:robbase_oct}: the bias of the posterior mean, the posterior standard deviation and the raw MSE of the marginal posterior mean.  The overall MSE, i.e., the sum of raw MSE across the different parameters, is also reported in the table caption. 
			
			{	The pooled posterior approach based on the naive pooling choices  $\widehat\omega$ and $\widetilde\omega$ produces inferences that are more accurate - in terms of bias and variance - than using either individual posterior. In comparison with the posterior based on all the summaries, the ranking is less clear, with the pooled posterior producing smaller biases and standard deviations than the joint posterior for some parameters, and the reverse being true for others.
				
				The best performing pooled posterior according to total MSE is $\widetilde\omega$. This posterior obtains a 65\% reduction in MSE across the experiments relative to $S_1$ alone, while a much smaller 5\% reduction is achieved relative to the posterior for $S_2$. Notably, the posterior $\widetilde\pi(\theta\mid S)$ has an MSE that is only about 5\% smaller than that of the pooled posteriors based on $\widetilde\omega$ and $\widehat\omega$. $\Bar{\theta}_2$ has smaller MSE then $\Bar{\theta}_1$. Consequently, the average weight on the second set of summaries under $\widehat\omega$ is  $0.8711$. The weight based on the estimated covariance is much closer to 1/2 on average and leads to a much less accurate pooled posterior.
				
				As hypothesised, the additional sampling variability that is required to compute the optimal pooling weight $\widehat\omega^\star_+$ delivers inferences that are less accurate than the infeasible weight $\omega^\star_+$. Given these results, we believe the more naive weights $\widehat\omega$ and $\widetilde\omega$ are likely to produce more reliable inferences on average than pooled posteriors based on  the estimated optimal pooling weight $\widehat\omega^\star_+$.}

			\begin{table}[H]
				\centering
				{\footnotesize
					\begin{tabular}{lrrrrrrrrr}
						\hline\hline
						& \multicolumn{3}{c}{$S_1$} & \multicolumn{3}{c}{$S_2$} & \multicolumn{3}{c}{$S=(S_1,S_2)$} \\ 
						\textbf{} & Bias        & Std    & MSE       & Bias       & Std      &MSE        & Bias     & Std      &MSE \\ \hline
						$a$ &   -0.0312  &  0.0462 &   0.0045  &0.0002  &  0.0180  &  0.0017 &0.0003 &   0.0194  &  0.0017\\
						$b$ &   0.0089  &  0.0958 &   0.0169   &0.0125  &  0.0392  &  0.0086 & 0.0132  &  0.0338  &  \textbf{0.0081} \\
						$g$ &  0.2804 &   0.2777  &  0.1408     &0.0159 &  0.1041   &  0.0445 & 0.0188  &  0.1102  &  \textbf{0.0410} \\
						$k$  &    0.0530  &  0.1118 &   0.0234  &0.0167 &   0.0482 &   0.0143&-0.0145   &   0.0370 &   \textbf{0.0108} \\\hline\hline
						& \multicolumn{3}{c}{$\widehat\omega^{\star}_+$} & \multicolumn{3}{c}{$\widehat\omega$} & \multicolumn{3}{c}{$\widetilde\omega_{}$}   \\ 
						\textbf{} & Bias    & Std      & MSE      & Bias      & Std        &MSE       & Bias    & Std      &MSE \\ \hline
						$a$ & -0.0200 &   0.0310 &   0.0028 &   -0.0018&    0.0175 &   \textbf{0.0016} &  -0.0044 &   0.0184 &   \textbf{0.0016}\\
						$b$ &-0.0006 &   0.0650 &   0.0117 &   0.0112 &   0.0377 &   0.0084 &   0.0094 &   0.0394 &   0.0083\\
						$g$ &0.0434  &  0.1860  &  0.0793 &   0.0183 &   0.1038  &  0.0429 &   0.0213 &   0.1096  &  0.0422\\
						$k$ &-0.0060 &   0.0760 &   0.0167 &  -0.0144&    0.0455 &   0.0138 &  -0.0116  &  0.0471 &   0.0134\\\hline\hline
				\end{tabular}}
				\caption{Posterior accuracy results  in the g-and-k model under the base set of summaries $S_1$ (robust summaries), the alternative set $S_2$ (octiles), and the pooled posteriors ($S$). Bias is the bias of the posterior mean for $\theta_0$ across the replications. Std is the average posterior standard deviation across the replications, and MSE the mean squared error. For each parameter the smallest MSE across methods is given in bold. The overall MSE over the replications is:  $S_1$: 0.1856; $S_2$: 0.0690  \textbf{$S$: 0.0617};  $\widehat\omega^\star_+$: 0.1105;  $\widehat\omega$:    0.0668 $\widetilde\omega$:    0.0655.}
				\label{tab:robbase_oct}
			\end{table} 
			
			\subsubsection{Example: Stochastic Volatility Model}\label{sec:maexam}
			Consider a simple stochastic volatility model of order one, where observed data is generated according to 
			\begin{equation}
				y_{t}=\exp(h_t/2)e_{t},\quad h_t=\zeta+\rho h_{t-1}+\sigma_v\nu_t,  \quad t=1,\dots,n,\quad \label{eq:SV1}
			\end{equation} where $(e_t,\nu_t)^\top$ are iid standard normal, $h_0\sim N\left(\frac{\zeta}{(1-\rho)},\frac{\sigma^2_v}{(1-\rho^2)}\right)$, and the unknown parameters are $\theta=(\zeta,\rho,\sigma_v)^\top$. Our prior distribution for $\theta$ is uniform over $(-1,1)\times(0,1)\times(0,1)$.

			\cite{martin2019auxiliary} demonstrate that useful summary statistics for this model can be obtained by first taking squares and logarithms of the process to notice that 
			$$
			y_t^\ast=\log y_t^2=\log e_t^2 +\zeta +\rho h_{t-1}+\sigma_v\nu_t,
			$$
			which resembles a latent autoregressive process of order one. Consequently, we can use summary statistics, in $\log y_t^2$, that would identify the parameters of an observable autoregressive model.  For the auxiliary autoregressive model
			$$
			y_t^\ast=\beta^\top X_t+\epsilon_t,\quad X_t=[1,\log y^2_{t-1},\log y^2_{t-2}]^\top,\quad t=3,\dots,n,
			$$ we write $\widehat\beta$ for the estimated regression coefficient for the observed data.  The observed three dimensional summaries are given by $S_1(\y)=\sum_{t=3}^{T}X_t(y_t^\star-\widehat\beta^\top X_t)$.

			In addition to sample moments from an auxiliary model, unconditional sample moments for data from the stochastic volatility model are known to provide reliable point estimators of the unknown parameters (\citealp{andersen1996gmm}), and so matching sample moments of the data should also provide reliable summary statistics. We consider four sample 
			moments based on the absolute value of powers of the observed data, i.e., $|y_t^k|$, $k=1,2,3,4$, and the first three sample autocovariances, i.e., $y_ty_{t-k}$, $k=1,2,3$. The resulting seven-dimensional summary statistic is denoted
			$S_2$. 
			
			{Again, we compare the accuracy of the pooled posteriors against the individual and joint posteriors. We apply these approaches to 100 synthetic datasets of size $n=1000$ generated from \eqref{eq:SV1} under the true parameter value $\theta_0=(-0.74,0.90,0.36)^\top$. Similar to the previous experiment, we apply three different pooling approaches based on the estimated pooling weight. In this example, we use the default sampling options in the R package \texttt{bsl} (\citealp{JSSv101i11}) to produce posterior samples from $\widetilde\pi(\theta\mid S_1)$, $\widetilde\pi(\theta\mid S_2)$ and $\widetilde\pi(\theta\mid S_1,S_2)$. For each posterior we obtain $5000$ MCMC samples that are based on using 100 synthetic datasets to estimate the mean and variance of the summaries. }

			{The results are presented in Table \ref{tab:sv1auxbase}. Similar to the g-and-k example, the pooled posteriors are more accurate than either individual posterior. Relative to the sample moment summaries, $S_1$, the pooled posterior based on $\widehat\omega$ obtains a nearly 53\% reduction in MSE across the experiments, while a 45\% reduction was achievable relative to the posterior based on $S_2$ and the posterior based on $S$. Across most parameters, both the bias and standard deviation of the pooled posteriors are smaller than that achieved by the posterior based on $S$. Hence, for the fixed computational budget employed in this experiment, the pooled posteriors are much more accurate than the posterior based on $S$. This is at least partially due to the aforementioned curse of dimensionality that makes LFI with a high-dimensional vector of summaries challenging.
			}
			
			\begin{table}[H]
				\centering
				{\footnotesize
					\begin{tabular}{lrrrrrrrrr}
						\hline\hline
						& \multicolumn{3}{c}{$S_1$} & \multicolumn{3}{c}{$S_2$} & \multicolumn{3}{c}{$S=(S_1,S_2)$} \\ 
						\textbf{} & Bias     & Std       & MSE    & Bias     & Std      &MSE       & Bias     & Std      &MSE \\ \hline
						$\zeta$  & 0.0214 &   0.2133  &  0.0452 &  0.0232   &   0.2108 &   0.0466 &  0.0286 &   0.2127 &   0.0476\\
						$\rho     $& -0.0001&  0.0289  & 0.0011 &  0.0030    &   0.0321 &   0.0009&    0.0037 &   0.0290 &   0.0009\\
						$\sigma_v$&-0.0390  & 	0.0753  &  0.0177 &  -0.0169 &   0.1101  &  0.0073 &  -0.0254 &  0.0695 &  0.0064\\\hline\hline
						& \multicolumn{3}{c}{$\widehat\omega^{\star}_+$} & \multicolumn{3}{c}{$\widehat\omega$} & \multicolumn{3}{c}{$\widetilde\omega_{}$}   \\ 
						\textbf{} & Bias &       Std & MSE      & Bias       & Std      &MSE       & Bias      & Std     &MSE \\ \hline
						$\omega$ &0.0224   &   0.1746   &  0.0312 & 0.0223     &   0.1508 &  \textbf{0.0234} & 0.0224    &  0.1521 &   0.0238\\
						$\rho$ &0.0019      &  0.0244   &   0.0006  &  0.0016   &   0.0217  &  0.0006 & 0.0017    &  0.0216  &  \textbf{0.0005}\\
						$\sigma_v$ & -0.0254 &   0.0701 &  0.0068  &  -0.0267 &   0.0651 &   0.0068  &  -0.0260  &  0.0637  &\textbf{0.0059}\\\hline\hline
				\end{tabular}}
				\caption{Posterior accuracy results in the stochastic volatility model under the base set of summaries $S_1$ (sample moments), the alternative set $S_2$ (autoregressive summaries), and the pooled posterior ($\omega$). The remaining information is as in Table \ref{tab:robbase_oct}. The overall MSE over the replications is:  $S_1$: 0.0639; $S_2$: 0.0574;  $S$: 0.0549;  $\widehat\omega^\star_+$: 0.0386; $\widehat\omega$:    \textbf{0.0301}; $\widetilde\omega$: 0.0302.} 
				\label{tab:sv1auxbase}
			\end{table}

			\subsection{Incompatible summaries}\label{sec:incomp}
			We now study the case where only one set of summaries is compatible, while the other is incompatible. We assume that, either by prior knowledge or previous studies, there  is a subset $S_1$ of $S$ that we believe is compatible, with $S_1\in\mathcal{S}_1\subseteq\mathbb{R}^{d_{1}}$, and $d_1\ge d_\theta$.  The set $S_2$ is possibly incompatible: there exist $\theta_0\in\Theta$ such that $b_{1}(\theta)=b_{0,1}\iff\theta=\theta_0$; while $b_2(\theta_0)\ne b_{0,2}$. 
			
			In this case,  we can show that (in large samples) the pooled posterior approach based on $\widetilde\omega$, places zero weight on the second set of summaries if they are in fact incompatible. 
			\begin{corollary}\label{cor:simple}
				Assume that Assumptions \ref{ass:sums}-\ref{ass:loss} are satisfied for $\widetilde\pi(\theta|S_1)$. If there exists some $\theta^\star\ne \theta^0$ such that  $\sqrt{n}(\bar\theta_2-\theta^\star)=O_p(1)$, and $\overline\Sigma_2=\Sigma_2+o_p(1)$, $\|\Sigma_2\|>0$, then $\omega^\star=0$ and $\widetilde\omega=o_p(1).	$
			\end{corollary}  
			
			Corollary \ref{cor:simple} demonstrates that if the summaries $S_1$ are compatible, but $S_2$ are incompatible, then the pooling weight converges to zero in probability; i.e., in large samples the pooled posterior places weight $1$ on the compatible set $S_1$. Of course, such a result requires that $S_1$ is compatible. A reasonable empirical check for compatibility is to see if the observed summaries fall within the region of support for the posterior predictive distribution of the summaries.  Alternatively, one can use the methods suggested by \cite{Marinetal2014} and \cite{frazier2021robust} to check whether or not the summaries $S_1$ are compatible.

			{When the summaries are not compatible, the behavior of LFI posterior means has not been formally established in all cases and \cite{frazier2021synthetic} show that the posterior mean may not even be asymptotically normal. Consequently, the theoretical results obtained in Lemma \ref{lem:naive1} and Corollary \ref{cor:simple} will not be satisfied, and determining the behavior of the pooling weight becomes difficult. Consequently, if posterior predictive analysis suggests that all summaries are not compatible, we suggest to instead conduct robust LFI using the approaches suggested by \cite{frazier2021robust}, which has been generalized to more complex LFI settings by \cite{kelly2023misspecification}. 
			}	
			
			\subsection{Example: individual-based model of toad movement}
			
			Here we consider the individual-based movement model of Fowler's Toads (\textit{Anaxyrus fowleri}) of \citet{marchand2017stochastic}, which has also been used as an illustrative example in other likelihood-free research (e.g.\ \citet{drovandi2021comparison}).  Here we only provide minimal details of the example and refer to \citet{marchand2017stochastic} and \citet{drovandi2021comparison} for more  information.
			
			The model has three parameters, $\theta = (\alpha,\xi,p_0)^\top$.  The overnight displacement for each toad is drawn from a Levy alpha-stable distribution, parameterised by $\alpha$ and $\xi$.   \citet{marchand2017stochastic} consider three models for how each toad takes refuge during the day. Here we consider their `Model 2' since there is evidence that the model does not provide a good fit to the data.  In this model, each toad will take refuge at the closest refuge site it has previously visited with a probability $p_0$, otherwise it will take refuge at the new location.   The empirical data consist of GPS location data for 66 toads for 63 days.  In \citet{marchand2017stochastic} the data is summarised down to four sets comprising the relative moving distances for time lags of $1,2,4,8$ days.  For each lag, we record the number of returns and the distances for the non-returns.  We further summarise the vector of non-return distances by 11 equally spaced quantiles.  For each time lag, there are thus 12 summary statistics (including the number of returns).  
			
			Anticipating that the model can capture data related to a lag of 1 day, but does not provide a good fit for longer time lags, we run two separate ABC analyses, one which just includes lag 1 summaries and another that includes summaries for the remaining lags, thus $\mbox{dim}(S_1) = 12$ and $\mbox{dim}(S_2) = 36$.  In each case we use the ABC-SMC algorithm of \cite{drovandi2011estimation} to sample the approximate posterior.   We find that the observed summaries for lag 1 are compatible with the model, while some summaries for the remaining lags lie in the tail of the posterior distribution of the summaries.   The estimated univariate posteriors of the parameters are shown in Figure \ref{fig:posteriors_toad}.   There is some indication of a difference in the posteriors between the two ABC analyses.  From pooling the two ABC analyses, an estimated $\tilde{\omega} = 0.061$ is obtained, which suggests placing a large weight on the ABC results based on the compatible lag 1 summaries, consistent with the theoretical results above.

			\begin{figure*}
				\centering
				\includegraphics[width=\textwidth,keepaspectratio]{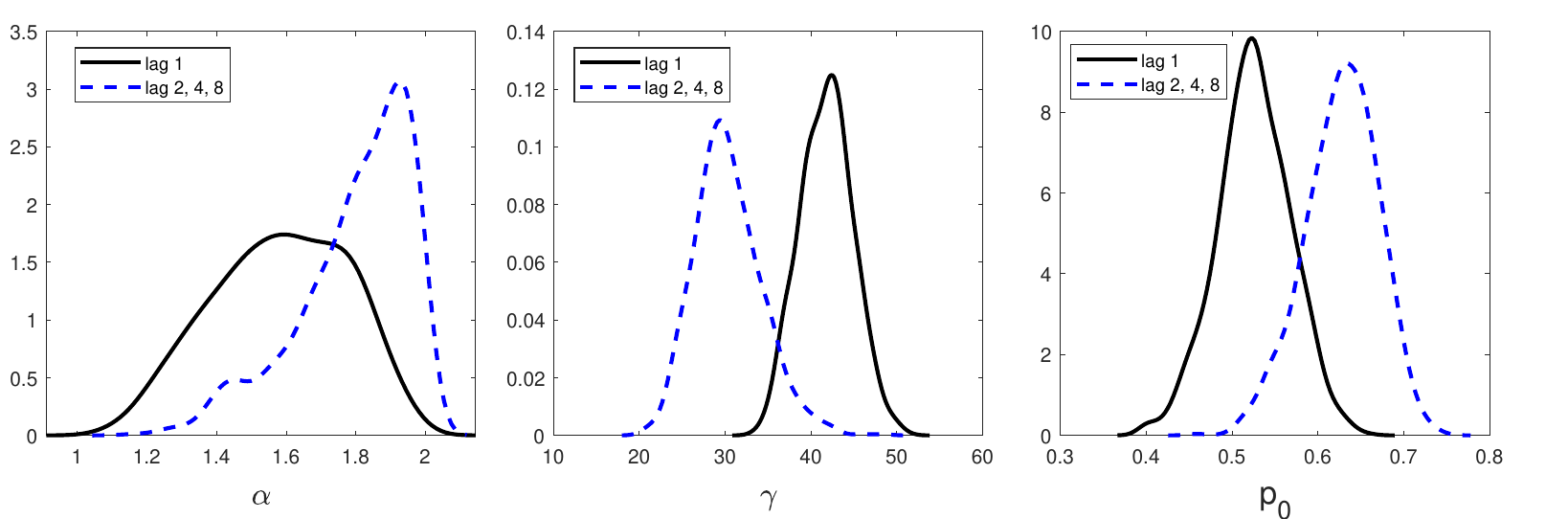}
				\caption{Estimated univariate posterior distributions for the parameters of the toad example.  Shown are the results for lag 1 summaries (solid) and the results for the remaining lags (dash).}
				\label{fig:posteriors_toad}
			\end{figure*}

			\section{Pooling different types of posteriors}\label{sec:poolBSL+ABC}
			Whilst the above analysis has so far focused on combing LFI posteriors built using different summary statistics, the pooled posterior approach is also applicable if we wish to combine summary statistic-based posteriors and posteriors built using general discrepancy measures between the observed and simulated data. Recently, several authors have suggested replacing the distance and summary statistics under which LFI is usually implemented with distances based on empirical measures. For a review of such methods, we refer to \cite{drovandi2021comparison}. The benefit of such methods are that they do not require a choice of summary statistics, however, as documented by \cite{drovandi2021comparison}, such methods may deliver inferences that are not as precise as those obtained under an informative set of summaries.

			Let $\mathcal{D}:\mathcal{Y}^n\times\mathcal{Y}^n\rightarrow\mathbb{R}_+$ denote a discrepancy function used to measure the difference between the observed data $\y$ and data $\z$ simulated under the model $P_\theta^{(n)}$. We assume that the observed and synthetic data have the same sample size. An ABC-based posterior for $\theta$ under $\mathcal{D}(\y,\z)$ can then be sampled using a number of different algorithms, such as accept/reject ABC or Markov chain Monte Carlo ABC (ABC-MCMC). In the experiments that follow we use a tuned version of the ABC-MCMC algorithm, see \cite{sisson2011likelihood} for a review, to obtain samples from the approximate posterior $\widetilde\pi(\theta|\mathcal{D})$.

			Given a posterior based on summaries $S$, $\widetilde\pi(\theta|S)$, and a posterior based on $\mathcal{D}$, $\widetilde\pi(\theta|\mathcal{D})$, we can pool the posteriors via
			$$
			\widetilde\pi_\omega(\theta|\y):=\omega\widetilde\pi(\theta|S)+(1-\omega)\widetilde\pi(\theta|\mathcal{D}). 
			$$
			Similar to \eqref{lopc}, we can also recentre the mixture components before
			pooling, which does not change the posterior mean for the pooled posterior. 
			Denote the estimated posterior variance obtained under $\widetilde\pi(\theta|\mathcal{D})$ by $\overline\Sigma_\mathcal{D}$, and $\overline\Sigma_S$ that obtained under $\widetilde\pi(\theta|S)$.
            We are not aware of any results on the asymptotic variability of discrepancy-based posteriors. For this reason no result like Lemma~\ref{lem:gamstar} can be given in this setting.
            However, estimated pooling weights can still be constructed in the same manner as the summary-based case. Namely, we can pool posteriors using the weights $\widehat\omega$ and $\widetilde\omega$ given earlier, which yields
			$$
			\widehat\omega_{}=1-\frac{\tr \overline\Sigma_\mathcal{D}}{\tr \{\overline\Sigma_\mathcal{D}+\overline\Sigma_S\}},\quad\text{ and }\quad \widetilde\omega_{}=1-\frac{\tr \overline\Sigma_\mathcal{D}}{(\bar\theta_S-\bar\theta_{\mathcal{D}})^\top(\bar\theta_S-\bar\theta_{\mathcal{D}})+\tr \{\overline\Sigma_{\mathcal{D}}+\overline\Sigma_S\}},
			$$where $\bar\theta_S$ denotes the posterior mean of $\widetilde\pi(\theta|S)$, and $\bar\theta_{\mathcal{D}}$ the posterior mean of $\widetilde\pi(\theta|\mathcal{D})$. 
			
			In the case of standard LFI posteriors, it is not at all clear how to combine inferences based on summaries and general discrepancies. If one were to attempt to construct a combined distance over the summaries and discrepancies, the resulting properties of such a combination are unknown, and presents issues also from a computational theoretical standpoint. In contrast, it is very simple to sample $\widetilde\pi(\theta|\mathcal{D})$, and $\widetilde\pi(\theta|S)$ separately, and fuse them together using $\widetilde\pi_\omega(\theta|\y)$. 
            The theoretical results derived in Section~\ref{sec:optpost} can, in principle, apply to combining posteriors built from summaries and discrepancies, so long as $\pi(\theta|\mathcal{D})$ satisfies a Bernstein-von Mises type results. However, the validity of such an assumption is not known for general choices of $\mathcal{D}$. We leave the extension of these results to the case of summaries and discrepancies for future research.
			
			\subsection{Examples: summaries and discrepancies}
			We now demonstrate the usefulness of this approach by combining posterior information built across combinations of summaries and discrepancies. In these experiments, we set $\mathcal{D}$ to be the Wasserstein metric, which yields the Wasserstein ABC (W-ABC) posterior studied in \cite{Bernton2017}; while other choices are entirely feasible, we maintain this choice as it is a popular metric. In addition, we conduct inference using BSL based on a generic auxiliary model; namely, we consider inference based on the summaries from a three component Gaussian mixture model. \cite{drovandi2021comparison} demonstrate that this choice performs well across several different experiments in terms of an accuracy comparison across many different likelihood-free approaches. 
			
			The choice of BSL for these experiments is deliberate and done to emphasize the practical usefulness of the pooling approach: in the case of BSL, it is not clear how to combine general discrepancies and summaries, since the form of the BSL posterior does not allow the incorporation of discrepancy distances.  
			
			\subsubsection{Example: g-and-k}
			For this experiment, we use precisely the same simulated data generated under the g-and-k model in Section \ref{sec:gandk}, and compare the results for BSL based on the auxiliary model summaries, against those obtained from the W-ABC approach, and the resulting pooled posteriors. We present all the same accuracy information as in Section \ref{sec:gandk} in Table \ref{tab:auxbasewass_gandk}. However, we note that in the experiments of \cite{drovandi2021comparison}, BSL coupled with the auxiliary model summaries performed very well, and so we would expect, \textit{a priori}, for the pooling weights to be close to unity across the experiments. 
			
			Analyzing Table \ref{tab:auxbasewass_gandk}, we see that the pooled posteriors have accuracy measures that are very similar to those obtained from the BSL posterior. The average weight on the BSL posterior under $\tilde{\omega}$ is $0.9817$. While not entirely surprising given the results of \cite{drovandi2021comparison}, the results demonstrate that posterior pooling is capable of providing large weight in cases where one set of information is clearly dominant. 
			\begin{table}[H]
				\centering
				{\footnotesize
					\begin{tabular}{lrrrrrrrrr}
						\hline\hline
						\textbf{(A)} & \multicolumn{3}{c}{$S$} & \multicolumn{3}{c}{$\mathcal{D}$} & \multicolumn{3}{c}{$(S,\mathcal{D})$} \\ 
						\textbf{} & Bias & Std    & MSE     & Bias      & Std     &MSE   & Bias & Std&MSE \\ \hline
						$a$ &0.0006   &   0.0373  &  0.0013 &  -0.0024 &   0.0444 &   0.0013 & --  & --    &  -- \\
						$b$ &0.0103  &  0.0767 &   0.0050  &  0.0161  &  0.0848  &  0.0050 &   --  &  --   &  --  \\
						$g$ &0.0200  &  0.1360 &   \textbf{0.0193}  &  0.0712 &   0.2410  &  0.0400 &   --  & --    & --  \\
						$k$ &-0.0071 &   0.0461 &   \textbf{0.0019} &  -0.0101&    0.0587 &   0.0023 &  --  & --    &  --   \\\hline\hline
						\textbf{(B)} & \multicolumn{3}{c}{$\omega=1/2$} & \multicolumn{3}{c}{$\widehat\omega$} & \multicolumn{3}{c}{$\widetilde\omega_{}$}   \\ 
						\textbf{} & Bias & Std & MSE& Bias & Std &MSE& Bias & Std&MSE \\ \hline
						$a$ &	  -0.0009 &   0.0413 &   \textbf{0.0012} &  -0.0001 &   0.0396&    \textbf{0.0012}&    0.0006&    0.0374 &   0.0013\\
						$b$ &	0.0132   & 0.0814  &  \textbf{0.0049}  &  0.0121  &  0.0795  &  \textbf{0.0049}   & 0.0103  &  0.0768  &  0.0050\\
						$g$&	0.0456  &  0.2024  &  0.0259  &  0.0311  &  0.1734  &  0.0196   & 0.0201  &  0.1379  &  \textbf{0.0193}\\
						$k$&	-0.0086&    0.0541 &   0.0020 &  -0.0083  &  0.0512 &   \textbf{0.0019}  & -0.0072 &   0.0464  &  \textbf{0.0019}\\\hline\hline
				\end{tabular}}
				\caption{Pooled posterior accuracy results in the g-and-k model under summaries ($S$) and discrepancies ($\mathcal{D}$). The remaining information is as in Table \ref{tab:robbase_oct}. The overall MSE over the replications is: $S$: 0.0275 ; $\mathcal{D}$: 0.0486; $\omega=1/2$: 0.0341; $\widehat\omega_{}$: {0.0295}; $\widetilde\omega_{}$: \textbf{ 0.0275}}.
				\label{tab:auxbasewass_gandk}
			\end{table}

			\subsubsection{Example: M/G/1}\label{sec:mg1}
			Next we consider an M/G/1 queueing model, which is a stochastic single-server queue model with Poisson arrivals and a general service time distribution. We follow existing constructions of this model in the LFI literature and maintain that the service times are $\mathcal{U}(\theta_1,\theta_2)$ (see e.g. \citealp{an2020robust}), while we consider that the inter-arrival times are distributed as $\text{Exp}(\theta_3)$.  We take the observed data $\y$ to be the inter-departure times of 51 customers, resulting in 50 observations. We generate 100 synthetic datasets from this model according to the true parameters  $(\theta_1,\theta_2,\theta_3)^\top = (1,5,0.2)^\top$. Since the service times are uniformly distributed we have the natural constraint that $\theta_1 < \min(y_1,y_2,\ldots,y_n)$ and so we incorporate that in the prior. Our prior beliefs on $(\theta_1,\theta_2,\theta_3)$ are thus given by $\mathcal{U}(0,\min(y_1,y_2,\ldots,y_n))\times \mathcal{U}(0,10+\min(y_1,y_2,\ldots,y_n)) \times \mathcal{U}(0,0.5)$.  
			
			The summaries used in this example (denoted by $S$) are again those based on an auxiliary Gaussian mixture (three components, so that $\mbox{dim}(S) = 8$), and the discrepancy used (denoted $\mathcal{D}$) is the 1-Wasserstein distance. In the experiments of \cite{Bernton2017}, the W-ABC posterior was shown to perform well against various summary-based counterparts, but in the experiments of \cite{drovandi2021comparison} the BSL posterior based on $S$ performed just as well as the W-ABC posterior. Thus, we expect the pooling weights between the two posteriors to be non-trivial.

			The results across the synthetic datasets are presented in Table \ref{tab:auxbasewass_mg1}, and demonstrate that there are (again) appreciable gains to be obtained by using pooled posteriors. In this experiment, using the pooled posterior based on $\widetilde\omega$ produces a 27\% reduction in the risk relative to using the BSL posterior alone, and a 25\% reduction in risk relative to using the W-ABC posterior. 
			
			\begin{table}[H]
				\centering
				{\footnotesize
					\begin{tabular}{lrrrrrrrrr}
						\hline\hline
						\textbf{(A)} & \multicolumn{3}{c}{$S$} & \multicolumn{3}{c}{$\mathcal{D}$} & \multicolumn{3}{c}{$(S,\mathcal{D})$} \\ 
						\textbf{} & Bias       & Std & MSE& Bias & Std &MSE& Bias & Std&MSE \\ \hline
						$\theta_1$ & -0.0627  &  0.1750 &   \textbf{0.0189}&   -0.1040 &   0.1845 &   0.0230& --  & --    &--   \\
						$\theta_2$ &0.0801    &  0.7260  &  0.5364   & 0.3051 &   0.8476   & 0.5212 & --    &  --   & --   \\
						$\theta_3$ &0.0598    &  0.0281 &   \textbf{0.0051}  &  0.0624 &   0.0332  &  0.0056 & --   & --    &--     \\\hline\hline
						\textbf{(B)} & \multicolumn{3}{c}{$\omega=1/2$} & \multicolumn{3}{c}{$\widehat\omega$} & \multicolumn{3}{c}{$\widetilde\omega_{}$}   \\ 
						$\theta_1$ &-0.0834  &  0.1839 &   0.0198 &  -0.0787 &   0.1815 &   0.0194 &  -0.0667 &   0.1766 &   0.0190\\
						$\theta_2$ &0.1926 &   0.8383  &  0.4451&    0.1684 &   0.7898 &   \textbf{0.4147} &   0.0924 &   0.7384  &  0.5198\\
						$\theta_3$ &0.0611 &   0.0324  &  0.0052 &   0.0611  &  0.0314&    0.0052  &  0.0605 &   0.0294  &  \textbf{0.0051}\\\hline\hline
				\end{tabular}}
				\caption{Pooled posterior accuracy results in the M/G/1 model under summaries ($S$) and discrepancies ($\mathcal{D}$). The remaining information is as in Table \ref{tab:robbase_oct}. The overall MSE over the replications is: $S$: 0.5603 ; $\mathcal{D}$: 0.5498; $\omega=1/2$: 0.4701; $\widehat\omega_{}$: 0.5439; $\widetilde\omega_{}$: \textbf{ 0.4393}.}
				
				\label{tab:auxbasewass_mg1}
			\end{table}
			
			\section{Discussion}\label{sec:discussion}
			
			In this work we propose to combine LFI posteriors based on different summary statistics, 
			or based on summary statistics and general discrepancy measures.
			A linear opinion pool of the component LFI posteriors is used 
			for the combination, and under appropriate assumptions 
			improved performance can be
			achieved for the pooled posterior mean in terms of asymptotic
			frequentist risk. {Additionally, if one of the summaries used is incompatible, we demonstrate that the corresponding component of the pool will receive zero weight asymptotically. Hence, not only can this pooled posterior improve point estimation compared to the individual LFI posteriors, but it can also guard against the impacts of model incompatibility in LFI, see, e.g., \cite{frazier2020model} and \cite{frazier2021synthetic} for details.}  
			
			While we consider linear pools to combine different LFI posterios, considering alternative strategies such as non-linear pools or ensemble methods such as stacking and bagging could be an interesting direction for future research. Looking to future work, it is of interest to apply similar methods
			in the context of modular posterior inferences for LFI 
			\citep{chakra2022modular}.  For discussions of modular Bayesian
			inference outside the LFI context see 
			\cite{liu+bb09}, \cite{lunn+bsgn09}, \cite{plummer15}, \cite{jacob+mhr17}
			and \cite{carmona+n20}.
			In \cite{chakra2022modular}, the authors consider a misspecified model
			and marginal inferences for a subset $\varphi$ of the parameters $\theta$.  
			They consider a linear opinion pool as a pooled posterior for $\varphi$, 
			with component LFI posteriors employing summary statistics $S_1$ and
			$S_2$, $S_1\subset S_2$.  The summaries $S_1$ are chosen
			to deliver reliable but possibly imprecise inferences about $\varphi$, 
			whereas $S_2$ can deliver more precise inferences, which we
			feel nevertheless should not be trusted if they are in conflict with
			the inferences derived from $S_1$.  
			
			The theory developed here must be modified in the case where
			$S_1\subset S_2$, or more generally where a joint core set of summaries appears in both $S_1$ and $S_2$.  In particular, Assumptions 1 and 2 in Section 3, 
			which assume that $S=(S_1^\top,S_2^\top)^\top$ has a strictly
			positive definite limiting covariance matrix under both the true
			data generating process and under the model, do not hold in
			this situation.  However, perhaps a more significant difficulty
			is that if the dimension of $S_2$ is much higher than $S_1$, it becomes
			more delicate to take the different levels of Monte Carlo error
			in the component LFI posteriors
			into account in the estimation of an appropriate mixing weight.
			
\begin{acks}[Acknowledgments]
	David Frazier was supported by the Australian Research Council's Discovery Early Career Researcher Award funding scheme (DE200101070). Christopher Drovandi was supported by the Australian Research Council Future Fellowships Scheme (FT210100260).  David Nott's research was supported by the Ministry of Education, Singapore, under the Academic Research Fund Tier 2 (MOE-T2EP20123-0009) and he is affiliated with the Institute of Operations Research and Analytics at the National University of Singapore.  
\end{acks}
			
			{ 
				{\footnotesize
					\bibliography{refs-2}

\begin{thebibliography}{}

\bibitem[An et~al., 2020]{an2020robust}
An, Z., Nott, D.~J., and Drovandi, C. (2020).
\newblock Robust {B}ayesian synthetic likelihood via a semi-parametric approach.
\newblock {\em Statistics and Computing}, 30(3):543--557.

\bibitem[An et~al., 2022]{JSSv101i11}
An, Z., South, L.~F., and Drovandi, C. (2022).
\newblock {BSL}: An {R} package for efficient parameter estimation for simulation-based models via {B}ayesian synthetic likelihood.
\newblock {\em Journal of Statistical Software}, 59(11):1–33.

\bibitem[Andersen and S{\o}rensen, 1996]{andersen1996gmm}
Andersen, T.~G. and S{\o}rensen, B.~E. (1996).
\newblock {GMM} estimation of a stochastic volatility model: A {M}onte {C}arlo study.
\newblock {\em Journal of Business \& Economic Statistics}, 14(3):328--352.

\bibitem[Andrews, 1991]{andrews1991heteroskedasticity}
Andrews, D.~W. (1991).
\newblock Heteroskedasticity and autocorrelation consistent covariance matrix estimation.
\newblock {\em Econometrica: Journal of the Econometric Society}, (3):817--858.

\bibitem[Ariely et~al., 2000]{AriTunBenBudDieHonWalZau2000}
Ariely, D., Tung~Au, W., Bender, R.~H., Budescu, D.~V., Dietz, C.~B., Gu, H., Wallsten, T.~S., and Zauberman, G. (2000).
\newblock The effects of averaging subjective probability estimates between and within judges.
\newblock {\em Journal of Experimental Psychology: Applied}, 6(2):130.

\bibitem[Bernton et~al., 2019]{Bernton2017}
Bernton, E., Jacob, P.~E., Gerber, M., and Robert, C.~P. (2019).
\newblock Approximate {B}ayesian computation with the {W}asserstein distance.
\newblock {\em Journal of the Royal Statistical Society Series B}, 81(2):235--269.

\bibitem[Blum, 2010]{blum2010non}
Blum, M.~G. (2010).
\newblock Approximate {B}ayesian computation: A nonparametric perspective.
\newblock {\em Journal of the American Statistical Association}, 105(491):1178--1187.

\bibitem[Blum et~al., 2013]{blum2013}
Blum, M.~G., Nunes, M.~A., Prangle, D., and Sisson, S.~A. (2013).
\newblock A comparative review of dimension reduction methods in approximate {B}ayesian computation.
\newblock {\em Statistical Science}, 28(2):189--208.

\bibitem[Browning et~al., 2018]{Browning2018}
Browning, A.~P., McCue, S.~W., Binny, R.~N., Plank, M.~J., Shah, E.~T., and Simpson, M.~J. (2018).
\newblock Inferring parameters for a lattice-free model of cell migration and proliferation using experimental data.
\newblock {\em Journal of Theoretical Biology}, 437:251--260.

\bibitem[Carmona and Nicholls, 2020]{carmona+n20}
Carmona, C. and Nicholls, G. (2020).
\newblock Semi-modular inference: enhanced learning in multi-modular models by tempering the influence of components.
\newblock In Chiappa, S. and Calandra, R., editors, {\em Proceedings of the Twenty Third International Conference on Artificial Intelligence and Statistics}, volume 108 of {\em Proceedings of Machine Learning Research}, pages 4226--4235. PMLR.

\bibitem[Chakraborty et~al., 2022]{chakra2022modular}
Chakraborty, A., Nott, D.~J., Drovandi, C., Frazier, D.~T., and Sisson, S.~A. (2022).
\newblock Modularized {B}ayesian analyses and cutting feedback in likelihood-free inference.
\newblock {\em arXiv preprint arXiv:2203.09782}.

\bibitem[Chen et~al., 2021]{Chen2021a}
Chen, Y., Zhang, D., Gutmann, M.~U., Courville, A., and Zhu, Z. (2021).
\newblock Neural approximate sufficient statistics for implicit models.
\newblock In {\em International Conference on Learning Representations (ICLR)}. arXiv:2010.10079.

\bibitem[Drovandi and Frazier, 2022]{drovandi2021comparison}
Drovandi, C. and Frazier, D.~T. (2022).
\newblock A comparison of likelihood-free methods with and without summary statistics.
\newblock {\em Statistics and Computing}, 32:42.

\bibitem[Drovandi and Pettitt, 2011a]{drovandi2011estimation}
Drovandi, C.~C. and Pettitt, A.~N. (2011a).
\newblock Estimation of parameters for macroparasite population evolution using approximate {B}ayesian computation.
\newblock {\em Biometrics}, 67(1):225--233.

\bibitem[Drovandi and Pettitt, 2011b]{drovandi2011likelihood}
Drovandi, C.~C. and Pettitt, A.~N. (2011b).
\newblock Likelihood-free {B}ayesian estimation of multivariate quantile distributions.
\newblock {\em Computational Statistics \& Data Analysis}, 55(9):2541--2556.

\bibitem[Evans and Guo, 2022]{evans2022combining}
Evans, M. and Guo, Y.~J. (2022).
\newblock Combining evidence.
\newblock {\em arXiv preprint arXiv:2202.02922}.

\bibitem[Fearnhead and Prangle, 2012]{FP2012}
Fearnhead, P. and Prangle, D. (2012).
\newblock Constructing summary statistics for approximate {B}ayesian computation: semi-automatic approximate {B}ayesian computation.
\newblock {\em Journal of the Royal Statistical Society: Series B (Statistical Methodology)}, 74(3):419--474.

\bibitem[Frazier and Drovandi, 2021]{frazier2021robust}
Frazier, D.~T. and Drovandi, C. (2021).
\newblock Robust approximate {B}ayesian inference with synthetic likelihood.
\newblock {\em Journal of Computational and Graphical Statistics}, 30(4):958--976.

\bibitem[Frazier et~al., 2018]{FMRR2016}
Frazier, D.~T., Martin, G.~M., Robert, C.~P., and Rousseau, J. (2018).
\newblock Asymptotic properties of approximate {B}ayesian computation.
\newblock {\em Biometrika}, 105(3):593--607.

\bibitem[Frazier et~al., 2024]{frazier2021synthetic}
Frazier, D.~T., Nott, D.~J., and Drovandi, C. (2024).
\newblock Synthetic likelihood in misspecified models.
\newblock {\em Journal of the American Statistical Association}, 0(0):1--12.

\bibitem[Frazier et~al., 2022]{frazier2019bayesian}
Frazier, D.~T., Nott, D.~J., Drovandi, C., and Kohn, R. (2022).
\newblock Bayesian inference using synthetic likelihood: asymptotics and adjustments.
\newblock {\em Journal of the {A}merical {S}tatistical {A}ssociation}, page in press.

\bibitem[Frazier et~al., 2019]{frazier2019indirect}
Frazier, D.~T., Oka, T., and Zhu, D. (2019).
\newblock Indirect inference with a non-smooth criterion function.
\newblock {\em Journal of Econometrics}, 212(2):623--645.

\bibitem[Frazier et~al., 2020]{frazier2020model}
Frazier, D.~T., Robert, C.~P., and Rousseau, J. (2020).
\newblock Model misspecification in approximate {B}ayesian computation: consequences and diagnostics.
\newblock {\em Journal of the Royal Statistical Society: Series B (Statistical Methodology)}.

\bibitem[Jacob et~al., 2017]{jacob+mhr17}
Jacob, P.~E., Murray, L.~M., Holmes, C.~C., and Robert, C.~P. (2017).
\newblock Better together? {S}tatistical learning in models made of modules.
\newblock arXiv:1708.08719.

\bibitem[Jiang, 2018]{Jia2018}
Jiang, B. (2018).
\newblock Approximate {B}ayesian computation with {Kullback-Leibler} divergence as data discrepancy.
\newblock In Storkey, A. and Perez-Cruz, F., editors, {\em Proceedings of the Twenty-First International Conference on Artificial Intelligence and Statistics}, volume~84 of {\em Proceedings of Machine Learning Research}, pages 1711--1721. PMLR.

\bibitem[Joyce and Marjoram, 2008]{joyce2008approximately}
Joyce, P. and Marjoram, P. (2008).
\newblock Approximately sufficient statistics and {B}ayesian computation.
\newblock {\em Statistical Applications in Genetics and Molecular Biology}, 7(1).

\bibitem[Kelly et~al., 2023]{kelly2023misspecification}
Kelly, R.~P., Nott, D.~J., Frazier, D.~T., Warne, D.~J., and Drovandi, C. (2023).
\newblock Misspecification-robust sequential neural likelihood.
\newblock {\em arXiv preprint arXiv:2301.13368}.

\bibitem[Lehmann and Casella, 2006]{lehmann2006theory}
Lehmann, E.~L. and Casella, G. (2006).
\newblock {\em Theory of point estimation}.
\newblock Springer Science \& Business Media.

\bibitem[Li and Fearnhead, 2018]{LF2016}
Li, W. and Fearnhead, P. (2018).
\newblock On the asymptotic efficiency of approximate {B}ayesian computation estimators.
\newblock {\em Biometrika}, 105(2):285--299.

\bibitem[Liu and Berger, 2009]{liu+bb09}
Liu, F.~Bayarri, M.~J. and Berger, J.~O. (2009).
\newblock Modularization in {B}ayesian analysis, with emphasis on analysis of computer models.
\newblock {\em Bayesian Analysis}, 4(1):119--150.

\bibitem[Lunn et~al., 2009]{lunn+bsgn09}
Lunn, D., Best, N., Spiegelhalter, D., Graham, G., and Neuenschwander, B. (2009).
\newblock Combining {MCMC} with `sequential' {PKPD} modelling.
\newblock {\em Journal of Pharmacokinetics and Pharmacodynamics}, 36:19--38.

\bibitem[Marchand et~al., 2017]{marchand2017stochastic}
Marchand, P., Boenke, M., and Green, D.~M. (2017).
\newblock A stochastic movement model reproduces patterns of site fidelity and long-distance dispersal in a population of {F}owler's toads ({A}naxyrus fowleri).
\newblock {\em Ecological Modelling}, 360:63--69.

\bibitem[Marin et~al., 2014]{Marinetal2014}
Marin, J.-M., Pillai, N.~S., Robert, C.~P., and Rousseau, J. (2014).
\newblock Relevant statistics for {B}ayesian model choice.
\newblock {\em Journal of the Royal Statistical Society: Series B (Statistical Methodology)}, 76(5):833--859.

\bibitem[Marin et~al., 2012]{marinea2012}
Marin, J.-M., Pudlo, P., Robert, C.~P., and Ryder, R.~J. (2012).
\newblock Approximate {B}ayesian computational methods.
\newblock {\em Statistics and Computing}, 22(6):1167--1180.

\bibitem[Martin et~al., 2019]{martin2019auxiliary}
Martin, G.~M., McCabe, B.~P., Frazier, D.~T., Maneesoonthorn, W., and Robert, C.~P. (2019).
\newblock Auxiliary likelihood-based approximate {B}ayesian computation in state space models.
\newblock {\em Journal of Computational and Graphical Statistics}, 28(3):508--522.

\bibitem[McAndrew and Reich, 2022]{McaRei2022}
McAndrew, T. and Reich, N.~G. (2022).
\newblock An expert judgment model to predict early stages of the {COVID-19} pandemic in the united states.
\newblock {\em PLoS Computational Biology}, 18(9):e1010485.

\bibitem[Nguyen et~al., 2020]{NguArbJulLueFor2020}
Nguyen, H.~D., Arbel, J., L{\"u}, H., and Forbes, F. (2020).
\newblock Approximate {B}ayesian computation via the energy statistic.
\newblock {\em IEEE Access}, 8:131683--131698.

\bibitem[Nott et~al., 2020]{nott2020checking}
Nott, D.~J., Wang, X., Evans, M., and Englert, B.-G. (2020).
\newblock Checking for prior-data conflict using prior-to-posterior divergences.
\newblock {\em Statistical Science}, 35(2):234--253.

\bibitem[Nunes and Balding, 2010]{nunes2010optimal}
Nunes, M.~A. and Balding, D.~J. (2010).
\newblock On optimal selection of summary statistics for approximate {B}ayesian computation.
\newblock {\em Statistical Applications in Genetics and Molecular Biology}, 9(1).

\bibitem[Plummer, 2015]{plummer15}
Plummer, M. (2015).
\newblock Cuts in {B}ayesian graphical models.
\newblock {\em Statistics and Computing}, 25:37--43.

\bibitem[Politis and Romano, 1994]{politis1994stationary}
Politis, D.~N. and Romano, J.~P. (1994).
\newblock The stationary bootstrap.
\newblock {\em Journal of the American Statistical association}, 89(428):1303--1313.

\bibitem[Prangle, 2018]{prangle2015summary}
Prangle, D. (2018).
\newblock Summary statistics.
\newblock In {\em Handbook of approximate Bayesian computation}, pages 125--152. Chapman and Hall/CRC.

\bibitem[Price et~al., 2018]{price2018bayesian}
Price, L.~F., Drovandi, C.~C., Lee, A., and Nott, D.~J. (2018).
\newblock Bayesian synthetic likelihood.
\newblock {\em Journal of Computational and Graphical Statistics}, 27(1):1--11.

\bibitem[Priddle et~al., 2022]{priddle2019efficient}
Priddle, J.~W., Sisson, S.~A., Frazier, D.~T., and Drovandi, C. (2022).
\newblock Efficient {B}ayesian synthetic likelihood with whitening transformations.
\newblock {\em Journal of Computational and Graphical Statistics}, 31(1):50--63.

\bibitem[Rayner and MacGillivray, 2002]{rayner2002numerical}
Rayner, G.~D. and MacGillivray, H.~L. (2002).
\newblock Numerical maximum likelihood estimation for the g-and-k and generalized g-and-h distributions.
\newblock {\em Statistics and Computing}, 12(1):57--75.

\bibitem[Sisson and Fan, 2011]{sisson2011likelihood}
Sisson, S.~A. and Fan, Y. (2011).
\newblock Likelihood-free {MCMC}.
\newblock {\em Handbook of Markov Chain Monte Carlo}, pages 313--335.

\bibitem[Sisson et~al., 2018]{sisson2018handbook}
Sisson, S.~A., Fan, Y., and Beaumont, M. (2018).
\newblock {\em Handbook of Approximate {B}ayesian Computation}.
\newblock Chapman and Hall/CRC, New York.

\bibitem[Stone, 1961]{stone1961opinion}
Stone, M. (1961).
\newblock The opinion pool.
\newblock {\em The Annals of Mathematical Statistics}, pages 1339--1342.

\bibitem[Tejero-Cantero et~al., 2020]{Tejero-Cantero2020}
Tejero-Cantero, A., Boelts, J., Deistler, M., Lueckmann, J.-M., Durkan, C., Gon{\c c}alves, P.~J., Greenberg, D.~S., and Macke, J.~H. (2020).
\newblock sbi: A toolkit for simulation-based inference.
\newblock {\em Journal of Open Source Software}, 5(52):2505.

\bibitem[Wang et~al., 2022]{wang2022forecast}
Wang, X., Hyndman, R.~J., Li, F., and Kang, Y. (2022).
\newblock Forecast combinations: an over 50-year review.
\newblock {\em arXiv preprint arXiv:2205.04216}.

\bibitem[Wood, 2010]{wood2010statistical}
Wood, S.~N. (2010).
\newblock Statistical inference for noisy nonlinear ecological dynamic systems.
\newblock {\em Nature}, 466(7310):1102--1104.

\bibitem[Yao et~al., 2023]{yao2023simulation}
Yao, Y., Blancard, B. R.-S., and Domke, J. (2023).
\newblock Simulation based stacking.
\newblock {\em arXiv preprint arXiv:2310.17009}.

\bibitem[Yao et~al., 2018]{yao+vsg18}
Yao, Y., Vehtari, A., Simpson, D., and Gelman, A. (2018).
\newblock {Using Stacking to Average Bayesian Predictive Distributions (with Discussion)}.
\newblock {\em Bayesian Analysis}, 13(3):917 -- 1007.

\end{thebibliography}
					\bibliographystyle{apalike}}
			}

\appendix

\section{Additional Discussion and Examples}\label{sec:adddiss}
\subsection{Summary Selection}\label{sec:formalapp}

\cite{FP2012} consider the problem of choosing summaries by attempting to give a decision rule $\delta\in\Theta$ that minimises the posterior expected loss
\begin{flalign}
	R_S(\delta)=\int (\theta-\delta)^\top  (\theta-\delta)
	\widetilde\pi(\theta|S)\dt\theta.
\end{flalign}Viewing the above loss as a function of arbitrary summaries $S$, \cite{FP2012} argue that taking $S=\E[\theta|S(\by)]$ results in minimizing $R_S(\delta)$. To estimate this summary statistic \cite{FP2012} propose the use of (non)linear regression methods on (powers of) the summary statistics. That is, given training data $\{\theta,S(\bz)\}$, \cite{FP2012} use as a summary statistic the fitted regression function evaluated at $S(\by)$. 

The goal of \cite{FP2012}, is not to choose between summaries, but to approximate the most informative collection given a fixed set of summaries $S$. In this way, there is no sense in which the use of posterior expected loss should deliver a helpful criterion for deciding amongst competing collections of summaries $S_1$ and $S_2$. Indeed, the minimum of $R_{S_j}(\delta)$ is obtained at $\delta=\bar\theta_j=\int \theta \widetilde\pi(\theta|S_j)\dt\theta$,  for each  $S_j$, and, under quadratic loss, $R_{S_j}(\bar\theta_j)\approx\tr\left[B_j^\top V_j^{-1} B_j\right]^{-1}/n$ (i.e., the asymptotic variance of the posterior). Consequently, choosing summaries by comparing $R_{S_j}(\bar\theta_j)$ would lead us to choose whichever collection of summaries delivered the smallest posterior variance. This is not a helpful selection criterion since, under weak regularity conditions, the posterior variance of $\pi(\theta|S)$ is (asymptotically) a decreasing function of the number of summaries in $S$; that is, asymptotically, adding more summaries can never increase the LFI posterior variance (see, e.g., \citealp{FMRR2016} for theoretical justification of this claim).  

Consequently, according to posterior expected loss, $S$ would asymptotically produce the smallest loss. The latter decision rule is unhelpful in practice as it completely disregards the fact that the computational resources required to approximate the posterior can increase drastically as the dimension of the summaries increase. While those resources are somewhat mitigated if one considers the \cite{FP2012} approach, it remains that the use of $R_{S_j}(\bar\theta_j)$ completely ignores the difference between posterior locations that arises when using different summaries, i.e., in general $\E_{\pi(\theta|S_1)}(\theta)\ne\E_{\pi(\theta|S_2)}(\theta)\ne \E_{\pi(\theta|S)}(\theta).$  When $S_1(y_{obs})\ne\E(\theta\mid y_{obs})$ and $S_2(y_{obs})\ne\E(\theta\mid y_{obs})$, the criterion $R_S(\delta)$ does not deliver a meaningful way to choose between $S_1$ or $S_2$: the loss $R(\delta)$ would simply choose whichever grouping of summaries was closest to $\E[\theta\mid y_{obs}]$, and would not deliver a way of trading off between $S_1$ and $S_2$.

The asymptotic viewpoint also masks the critically important issue of the (Monte Carlo) accuracy of the resulting posterior approximation for a fixed computational budget. That is, while we can never decrease the asymptotic variance of the posterior by adding summaries, given a finite-time computational budget,  the variability of the posterior approximation is an increasing function of the dimension of the summaries (see, e.g., \citealp{blum2010non}). Hence, given a finite computational budget, a posterior approximation based on $S$, denoted by $\widetilde\pi(\theta|S)$, can easily have larger amounts of variability than a posterior approximation that targets a lower-dimensional set of summaries, e.g., $\widetilde\pi(\theta|S_1)$. Hence, in practice large collections of summary statistics are not generally helpful for LFI without the application of adjustment procedures. 

\subsection{Estimating $\Omega_\Sigma$}\label{sec:estCov}
To estimate the pooling weight $\omega^\star_+$, we must construct an estimator of the covariance matrix $\Omega_\Sigma = Q_1\Omega_{1,2}Q_2^\top$, where $Q_j=\Sigma_jB_j^\top V_j^{-1}$ for $j=1,2$; please see Section~3.1 of the main paper for specific definitions. Therefore, to estimate $\Omega_\Sigma$ we must estimate both the covariance matrix of the summaries, defined as $V_\Sigma$, and the gradient terms $B_j=\nabla_\theta b_j(\theta_0)$. 

The covariance matrix of the summaries can be estimated using bootstrapping methods. In particular, if the data is iid, we generate $b=1,\dots,B$, bootstrap replicates of the summary statistics $S^{(b)}=(S_1(\y^{(b)})^\top,S_2(\y^{(b)})^\top)^\top$, and then form the sample covariance matrix of the replicates $\overline{V}_\Sigma$, which itself is composed of the variance estimates $\overline{V}_j$ and covariance estimate $\overline{\Omega}_{1,2}$. In the numerical experiments with iid data - the g-and-k example - we used $B=1000$ replications. 

When the data is weakly dependent, a block-bootstrap can be employed to generate the corresponding bootstrapped samples, and a heteroskedastic and auto-correlation (HAC) consistent covariance matrix estimator used, e.g., \cite{andrews1991heteroskedasticity}, in place of the usual sample variance. For the stochastic volatility example, we used the block bootstrap of \cite{politis1994stationary} with block length of ten observations, and the standard sample covariance was applied to the bootstrapped summaries. In the stochastic volatility example, we found that the use of HAC variance matrix made no discernible difference to the results, and so used the simpler version. 

The last component needed to estimate $\Omega_\Sigma$ are the gradients $B_j=\nabla_\theta b_j(\theta_0)$. These components can be obtained through automatic or numerical differentiation using a large number of replications from the DGP generated under the posterior mean $\bar\theta_j:=\E_{\widetilde\pi(\theta|S_j)}[\theta]$. In particular, we can generate $N$ sample paths under $\bar\theta_j$ to obtain simulated data $\{\bar{z}_j\sim P_{\bar\theta_j}:j=1,\dots,N\}$, and estimate the gradient $B_j$ by differentiating the sample average 
$$
\overline{S}_j(\bar\theta_j):=N^{-1}\sum_{j=1}^{N}S_j(\bar{z}_j),
$$ with respect to each component of $\theta$. This can be done using automatic differentiation methods, or numerical differentiation. For instance, if $\theta\in\mathbb{R}$, then a  central finite difference  numerical derivative could be used to estimate $B_j$, whereby, for some small $h>0$, we simulate $\{\bar{z}_j^+\sim P_{\bar\theta_j+h}:j=1,\dots,N\}$ and $\{\bar{z}_j^{-}\sim P_{\bar\theta_j-h}:j=1,\dots,N\}$, and then estimate $B_j$ using 
$$
\overline{B}_j:=\frac{\overline{S}_j(\bar\theta_j+h)-\overline{S}_j(\bar\theta_j-h)}{2h}=\frac{1}{2hN}\sum_{j=1}^{N}\{S_j(\bar{z}_j^+)-S_j(\bar{z}_j^-)\}.
$$In the examples in the paper, the above central finite difference estimator was used with $N=1000$ sample paths. We also note that if the summaries are not particularly smooth in the unknown parameters, such as the case of quantiles of the data, the methods developed in \cite{frazier2019indirect} can be used to consistently estimate these components.

The posterior variance $\Sigma_j$ can be directly estimated using the sample variance of posterior draws from $\widetilde\pi(\theta\mid S_j)$, and we denote this estimate as $\overline\Sigma_j$. Given the estimators $\overline{B}_j$, $\overline{V}_\Sigma$, and $\overline\Sigma_j$, $Q_j$ can be estimated using $\overline{Q}_j=\overline\Sigma_j\overline{B}_j^\top\overline{V}_j^{-1}$, and $\Omega_\Sigma$ is then estimated as 
$
\overline\Omega_\Sigma=\overline{Q}_1\overline\Omega_{1,2}\overline{Q}_2.
$

\subsection{Cell Biology Example} \label{sec:cell}

Here we consider the lattice-free collective cell spreading model of \citet{Browning2018}.  The model permits cells to move freely in continuous space. There are three parameters in the model. There are two parameters that impact the spatial distribution of the cells, $m$ and $\gamma_b$.  The parameter $p$ affects the number of cells.  For specific details on the stochastic model, see \citet{Browning2018}.  

In the experiments of \citet{Browning2018}, images of the cell population are taken every 12 hours with the final image taken at 36 hours.  \citet{Browning2018} use the number of cells and the pair correlation computed from each of the three images as the summary statistics, resulting in a six dimensional summary statistic, $S$.  The pair correlation is the ratio of the number of pairs of agents separated by some pre-specified distance to an expected number of cells separated by the same distance if the cells were uniformly distributed in space.  In an attempt to learn more about $m$ and $\gamma_b$, \citet{priddle2019efficient} consider a higher dimensional set of statistics summarising the spatial information.  They consider Ripley's $K$ and $J$ functions evaluated at various diameters for each time point. Combined with the same total number of cells at each time point, there are 21 summary statistics in total;  see \citet{priddle2019efficient} for more details.  We refer to the two sets of summary statistics as ``pair" \citep{Browning2018} and ``spatial" \citet{priddle2019efficient}, respectively.  

We consider two likelihood-free algorithms.  One of them uses the SMC ABC replenishment algorithm of \citet{drovandi2011estimation} where the algorithm is stopped when the acceptance rate of the MCMC step falls below 1\%.  We also consider the MCMC BSL algorithm of \citet{price2018bayesian}.  For BSL, we use 10000 MCMC iterations with a random walk covariance matrix tuned using some pilot runs based on a simulated dataset.   Our results below are based on 50 independent datasets simulated using $m = 1$, $p = 0.04$ and $\gamma_b = 0.5$.   The prior distribution is set as $p \sim \mathcal{U}(0,10)$, $m \sim \mathcal{U}(0,0.2)$ and $\gamma_b \sim \mathcal{U}(0,20)$ with no dependence amongst parameters.

Firstly we consider pooling the results from ABC with the pair correlation statistics (ABC pair) and ABC with the spatial statistics (ABC spatial).  We might suspect that the spatial statistics will carry more information about $m$ and $\gamma_b$ than the pair statistics, but they have a higher dimension and we may be concerned that ABC spatial may produce inferences that are detrimental to $p$.  The results are shown in Table \ref{tab:abcpair_vs_abcspatial}.  It is evident that the pooled results improve on the inferences for $m$ and $\gamma_b$ compared to ABC pair (due to the good performance of ABC spatial for these two parameters) and improve on the inferences for $p$ compared to ABC spatial (due to the good performance of ABC pair for this parameter).  Since the parameter estimates are on different scales, we also consider pooling separately for each individual parameter.  We can see that pooling with the first set of weights produces low relative MSEs for all three parameters compared to the other approaches.

Secondly we consider pooling the results from ABC spatial and BSL spatial.  BSL avoids the tolerance error associated with ABC, but we might be concerned about its Gaussian likelihood assumption.  It turns out that BSL is very effective in this particular problem, since Table \ref{tab:abcspatial_vs_bslspatial} shows that it produces the smallest MSE for all parameters.  However, it can be seen that the pooled results produces small relative MSEs compared to ABC spatial.  Thus, there is only a small loss of efficiency compared to BSL spatial, whilst providing some robustness to the Gaussian assumption by pooling with the ABC results.

\begin{table}[H]
	\centering
	{\footnotesize
    		\begin{tabular}{lrrrrrr}
			\hline\hline
			\textbf{(A)} & \multicolumn{3}{c}{$S_1$} & \multicolumn{3}{c}{$S_2$}  \\ 
			\textbf{} & Bias & Std & MSE& Bias & Std &MSE \\ \hline
			$m$ & 0.0018  & 0.58   & 0.15   & 0.16   & 0.41    & 0.12    \\
			$p$ & -1.1e-4 &  0.0017  &  \textbf{2.9e-6}   & 3.2e-4  & 0.0030    & 4.1e-6 \\
			$\gamma_b$ & 2.8  & 4.09   & 10  & 0.35   & 1.45   & 1.1     \\\hline\hline
			\textbf{(B)} & \multicolumn{3}{c}{$\widehat\omega$} & \multicolumn{3}{c}{$\widetilde\omega$} \\ 
			\textbf{} & Bias & Std & MSE& Bias & Std &MSE \\ \hline
			$m$ & 0.17	   & 0.47    &  0.12   & 0.09  &  0.51    & 0.10  \\
			$p$ & 2.5e-4	   &  0.0028  & 3.8e-6 & 1.2e-4    & 0.0026   & 3.4e-6   \\
			$\gamma_b$&	 0.56  & 2.1  & \textbf{1.3}    & 1.4    &  3.1  & 3.1  \\\hline\hline
			\textbf{(C)} & \multicolumn{3}{c}{$\widehat\omega$} & \multicolumn{3}{c}{$\widetilde\omega$}   \\ 
			\textbf{} & Bias & Std & MSE& Bias & Std &MSE \\ \hline
			$m$ & 0.056	   & 0.49    &  \textbf{0.083}   & 0.039  &  0.51    & 0.10   \\
			$p$ & -1.6e-5	   &  0.0020  & 3.0e-6 & -2.2e-5    & 0.0020   & 3.0e-6 \\
			$\gamma_b$&	 0.54  & 2.1  & \textbf{1.3}    & 1.4    &  3.1  & 3.0  \\\hline\hline
	\end{tabular}}
	\caption{Pooled posterior accuracy results in the cell biology model under the base set of summaries $S_1$ (pair) and the alternative set $S_2$ (spatial). The remaining information is as in Table~1.  The average value of $\widehat\omega$ and $\widetilde\omega$ over the 50 datasets is 0.13 and 0.37, respectively, indicating preference for the inference based on the spatial summaries. (C) shows the same results as (B) except that pooling is done for each individual parameter to help remove the effect of scaling between different parameters. }  
\label{tab:abcpair_vs_abcspatial}
\end{table}

\begin{table}[H]
\centering
{\footnotesize
	\begin{tabular}{lrrrrrr}
		\hline\hline
		\textbf{(A)} & \multicolumn{3}{c}{ABC $S_2$} & \multicolumn{3}{c}{BSL $S_2$} \\ 
		\textbf{} & Bias & Std & MSE& Bias & Std &MSE \\ \hline
		$m$ & 0.16   &  0.41   &  0.12   & 0.10    & 0.21    &  \textbf{0.07}  \\
		$p$ & 3.2e-4  &  0.0030   &  4.1e-6   & 5.2e-6   &  0.0014    & \textbf{2.8e-6}    \\
		$\gamma_b$ &  0.35  &  1.4   & 1.1  &   -1.2e-4   &  0.59  & \textbf{0.23}    \\\hline\hline
		\textbf{(B)} & \multicolumn{3}{c}{$\widehat\omega$} & \multicolumn{3}{c}{$\widetilde\omega$}    \\ 
		\textbf{} & Bias & Std & MSE& Bias & Std &MSE  \\ \hline
		$m$ &  0.12	   &  0.27    &  \textbf{0.07}    & 0.12   &  0.29   &  0.08   \\
		$p$ &  4.6e-5  &  0.0017   & \textbf{2.8e-6}  &  1.0e-4   & 0.0020    & 3.1e-6  \\
		$\gamma_b$ & 0.03	   & 0.80   &   0.25   & 0.11     &  0.95   & 0.42   \\\hline\hline
\end{tabular}}
\caption{Pooled posterior accuracy results in the cell biology model under the base inference method (ABC spatial)  and the alternative inference method (BSL spatial). The remaining information is as in Table~1.  The average value of $\widehat\omega$ and $\widetilde\omega$ over the 50 datasets is 0.16 and 0.26, respectively, indicating preference for the BSL spatial results.}  
\label{tab:abcspatial_vs_bslspatial}
\end{table}

\subsection{Additional Results for the Examples}

Table~\ref{tab:average_weights} shows the average and standard deviation for the estimated weights across the repetitions for all experiments. 

\begin{table}[H]
\centering
{\footnotesize
	\begin{tabular}{lcccc}
\hline\hline
 & \multicolumn{2}{c}{$1-\widehat{\omega}$} & \multicolumn{2}{c}{$1-\widetilde{\omega}$} \\
 & mean & std & mean & std \\
 \hline
\textbf{(A)} Different types of summaries & \multicolumn{1}{l}{} & \multicolumn{1}{l}{} & \multicolumn{1}{l}{} & \multicolumn{1}{l}{} \\
g-and-k & 0.1289 & 0.0544 & 0.4925 & 0.1921 \\
Stochastic Volatility Model & 0.5902 & 0.1333 & 0.4925 & 0.1507 \\
\textbf{(B)} Different types of posteriors &  &  &  &  \\
g-and-k & 0.6975 & 0.0618 & 0.9817 & 0.0329 \\
M/G/1 & 0.5844 & 0.1353 & 0.8663 & 0.1328 \\
\hline\hline
\end{tabular}}
\caption{Average and standard deviation for the estimated weights $\widehat{\omega}$ and $\widetilde{\omega}$ across the repetitions for the experiments described in the main text. The table is orientated to show the weight put on the first individual posterior.} 
\label{tab:average_weights}
\end{table}

\section{Proofs of Main Results}\label{sec:proofs}

In this section, we prove the main results stated in the paper. However, before doing so, we state a few useful lemmas that allow us to simplify the proofs of certain results. 

\begin{proof}[Proof of Lemma~3]
The result follows from Assumption 1-3 and similar arguments to Corollary 1 in \cite{frazier2019bayesian}. In particular, following the arguments in Corollary 1 of \cite{frazier2019bayesian}, for $Z_{n,j}=Q_j\sqrt{n}\{S_j(\y)-b_j(\theta_0)\}$, 
\begin{flalign*}
\bar\theta_j &= \int \theta\widetilde\pi(\theta|S_j)\dt\theta= \int (\theta_0+t_j/\sqrt{n}+Z_{n,j}/\sqrt{n})\widetilde\pi(t|S_j)\dt t
\end{flalign*}
so that 
\begin{flalign*}
\sqrt{n}(\bar\theta_j-\theta_0)-Z_{n,j}&=\int t_j \widetilde\pi(t|S_j)\dt t\\&=\int t_j\{\widetilde\pi(t|S_j)-N(t_j;0,\Sigma_j)\}\dt t_j+\int t_jN(t_j;0,\Sigma_j)\dt t_j.
\end{flalign*}The second term is zero by definition, while the first term can be bounded as 
\begin{flalign*}
\int t_j\{\widetilde\pi(t|S_j)-N(t_j;0,\Sigma_j)\}\dt t_j\le \int \|t_j\||\{\widetilde\pi(t|S_j)-N(t_j;0,\Sigma_j)\}|\dt t_j=o_p(1)
\end{flalign*}where the $o_p(1)$ term follows by Assumption 3. 

Thus, it follows that  
\begin{flalign*}
&\sqrt{n}(\bar\theta_1-\theta_0)-Q_1\sqrt{n}\{S_1(\by)-b_1(\theta_0)\}=o_p(1)\\
&\sqrt{n}(\bar\theta_2-\theta_0)-Q_2^{}\sqrt{n}\{S_2(\by)-b_2(\theta_0)\}=o_p(1)
\end{flalign*}	
However, under Assumptions 1 and 2,
\begin{flalign*}
\sqrt{n}\{S_{2}(\by)-b_2(\theta_0)\}&=\sqrt{n}\{S_{2}(\by)-b_{2,0}\}+\sqrt{n}\{b_2(\theta_0)-b_{2,0}\}\\&=\sqrt{n}\{S_{2}(\by)-b_{2,0}\}+\sqrt{n}\delta_{2,n}\\&\Rightarrow \mathcal{N}\{\tau_2,V_2\},
\end{flalign*}where the second line follows from the convergence in Assumption 1. From the joint convergence of $S=(S_1^\top,S_2^\top)^\top$ in Assumption 1, the stated joint convergence then follows. 
\end{proof}

\begin{proof}[Proof of Lemma 4]
For $\bar\theta$ denoting $\bar\theta_1$ or $\bar\theta_2$, a second-order Taylor expansion of $L(\theta_0,\bar\theta)$ around $\theta_0$, with Lagrange remainder term $\vartheta$ satisfying $\|\vartheta-\theta_0\|\le C\|\theta-\theta_0\|$ for some $C>0$, yields
\begin{flalign*}
L(\theta_0,\bar\theta)&=L(\theta_0,\theta_0)+\partial L(\theta_0,\theta_0)/\partial\theta^\top (\theta-\theta_0)+\frac{1}{2}(\theta-\theta_0)^\top H(\theta_0)(\theta-\theta_0)\\&+\frac{1}{2}(\theta-\theta_0)^\top \left[H(\vartheta)-H(\theta_0) \right](\theta-\theta_0)\\&\le \frac{1}{2}\|(\bar\theta-\theta_0)\|_{H(\theta_0)}^2+M\|(\bar\theta-\theta_0)\|^3,
\end{flalign*}where the second line follows from Assumption 4 and the definition of the intermediate value. Hence, $$
nL(\theta_0,\bar\theta)=\frac{1}{2}\{\sqrt{n}(\bar\theta-\theta_0)\}^\top H(\theta_0)\{\sqrt{n}(\bar\theta-\theta_0)\}+o(\|\{\sqrt{n}(\bar\theta-\theta_0)\}\|^2)
$$

Define $Y_{j,n}:=\sqrt{n}(\bar\theta_j-\theta_0)$, and note that, by Lemma 3, $$Y_{j,n}\Rightarrow Y:=\begin{cases}N(0,\Sigma_1)&\text{ if }j=1\\
N(Q_2\tau_2,\Sigma_2)	&\text{ if }j=2
\end{cases}.$$ For $Q_{j,n}:=\|Y_{j,n}\|^2_H$, let $Y_{j,n,\zeta}=Y_{j,n}\I[Q_{j,n}\le\zeta]+\zeta \I[Q_{j,n}>\zeta]$. By Theorem 1.8.8 of \cite{lehmann2006theory}, 
$$
\lim_{n\rightarrow\infty}\E\left[\|Y_{j,n,\zeta}\|_{H(\theta_0)}^2\right]=\E\left[\|Y_j\|_{H_0}^2\I(\|Y_j\|_{H_0}^2\le \zeta)\right]+\zeta^2\text{Pr}(\|Y_j\|^2_{H_0}>\zeta).
$$For $\zeta\rightarrow\infty$, the RHS of the above converges to 
$$\E[\|Y_j\|^2_{H_0}]=\begin{cases}\tr H_0\Sigma_1&\text{ if }j=1\\\tau_2^\top{Q}_2^\top H_0Q_2\tau_2+\tr H_0\Sigma_2&\text{ if }j=2
\end{cases}.$$
\end{proof}

\begin{proof}[Proof of Lemma 1]
Write $$\sqrt{n}\{\bar\theta(\omega)-\theta_0\}=\sqrt{n}\{(1-\omega)\bar\theta_1+\omega\bar\theta_{2}-\theta_0\}=(1-\omega)\sqrt{n}\{\bar\theta_1-\theta_0\}+\omega\sqrt{n}\{\bar\theta_{2}-\theta_0\}.$$
Recall the definitions of $Q_j=\Sigma_j B_j^\top V_j^{-1}$, and let $Q^\star_1=[Q_1 \;\mathbf{0}_{d_\theta\times d_{s_2}}]$ and $Q^\star_2=[\mathbf{0}_{d_\theta\times d_{s_1}} \;Q_2]$. For $Z_{n}=\sqrt{n}\{S(\by)-b(\theta_0)\}$, under Assumptions 1, by Lemma 3, 
\begin{flalign*}
\sqrt{n}\{\bar\theta(\omega)-\theta_0\}&=(1-\omega)Q_1^\ast Z_{n}+\omega Q_2^\star Z_n \Rightarrow Y(\omega):=(1-\omega)Q_1^\star M+\omega Q_2^\star M
\end{flalign*}where $M\sim N(\xi,V_{1,2})$ with $\xi=(0^\top,\tau_2^\top)^\top$, and $V_{1,2}=\text{Var}[\sqrt{n}\{S(\by)-b(\theta_0)\}]$. 

Following similar arguments to the proof of Lemma 4 yields
$\mathcal{R}_0(\omega)=\E\left[\|Y(\omega)\|^2_{{H_0}}\|\right]$, and writing out $\|Y(\omega)\|^2_{H_0}$, we have 
\begin{flalign*}
\|Y(\omega)\|^2_{H_0}=(1-\omega)^2\|Q_1^\ast M\|^2_{H_0}+\omega^2\|Q_2^\ast M\|^2_{H_0}+2\omega(1-\omega) (Q_1^\ast M)^\top H_0  Q_2^\ast M .
\end{flalign*}The result follows by taking the expectations of each term, and solving for the optimal $\omega$.

For the first term, write $\|Q_1^\ast M\|^2_{H_0}=\|Q_1^\ast (M-\xi)+Q_1^\ast\xi\|^2_{H_0}$, and note that $Q_1^\ast\xi=0$. Hence, $$\E[\|Q_1^\ast (M-\xi)\|^2_{{H_0}}]=\tr {H_0}Q_1^\ast V_{1,2}(Q_1^\ast) ^\top=\tr{H_0}Q_1V_1Q_1^\top=\tr {H_0}[B_1^\top V_1^{-1}B_1]^{-1},$$where the last equality follows from the definition of $Q_1$. Applying a similar approach to the second term yields
\begin{flalign*}
\E\|Q_2^\ast M\|^2_{H_0}&=\tr {H_0} Q_2V_{2}Q_2^\top +\tau_2^\top Q_2^\top {H_0} Q_2\tau=\tr {H_0} [B_2^\top V_2^{-1} B_2]^{-1}+\tau_2^\top Q_2^\top {H_0} Q_2\tau
\end{flalign*}
For the last term, write 
\begin{flalign*}
\E\left[(Q_1^\ast M)^\top {H_0} \left\{Q_2^\ast M\right\}\right]&=\E\tr\left[ {H_0} \left\{Q_2^\ast M\right\}M^\top Q_1^{\ast\intercal}\right]\\&=\tr {H_0} Q_{2}^\ast\E[MM^\top] Q_1^{\ast\intercal}\\&=\tr {H_0} Q_2^\ast\{\xi\xi^\top +V_{1,2}\} Q_1^{\ast\intercal}\\&=\tr {H_0}Q_2\Omega_{2,1}Q_1^\top
\end{flalign*}where we have used $Q_1^\ast\xi=0$.

We recall the following notations: $\Sigma_1:=[B_1^\top V_{1}^{-1}B_1]^{-1}$ and $\Sigma_2=[B_{2}^\top V_{2}^{-1}B_{2}]^{-1}$. Using this, and the above expectations, $\mathcal{R}_0(\omega)$ becomes
\begin{flalign*}
\mathcal{R}_0(\omega)&=(1-\omega)^2\tr {H_0}\Sigma_1+\omega^2[\tr {H_0}\Sigma_2+\tau_2^\top Q_{2}^\top {H_0}Q_{2}\tau_2]+2\omega(1-\omega) \tr {H_0}\Omega_\Sigma\\&\equiv(1-\omega)^2\mathcal{R}_0(1)+\omega^2\mathcal{R}_0(2)+2\omega(1-\omega) \tr {H_0}\Omega_\Sigma
\end{flalign*}
To maximize $\mathcal{R}_0(\omega)$ over $\omega\in[0,1]$ we consider the Lagrangian 
$$
\mathcal{L}(\omega,\lambda)=\mathcal{R}_0(\omega)+\lambda(1-\omega),
$$where $\lambda$ is the multiplier associated to the constraint $(1-\omega)\ge0$.

First, consider that $\omega^\star$ is in the interior of the space, i.e., $0<\omega^\star<1$. Differentiating the above wrt $\omega$ and solving for $\omega$ as a function of $\lambda$ yields the solution:
\begin{equation}\label{eq:cond}
\omega^\star(\lambda)=\frac{\lambda}{2\mathcal{J}}+\frac{\mathcal{R}_0(1)-\tr {H_0}\Omega_\Sigma}{\mathcal{J}}=\frac{\lambda+2(\mathcal{R}_0(1)-\tr {H_0}\Omega_\Sigma)}{2\mathcal{J}},
\end{equation}
where $$\mathcal{J}:=\mathcal{R}_0(1)+\mathcal{R}_0(2)-\tr {H_0}\Omega_\Sigma=\tr {H_0}\left\{\Sigma_1+\Sigma_2-\Omega_\Sigma\right\}+\tau_2^\top Q_{2}^\top{H_0}Q_{2}\tau_2.$$ The solution $\omega^\star(\lambda)$ must obey the complementary slackness condition
\begin{equation}\label{eq:slack}
0=\lambda(1-\omega^\star(\lambda))
\end{equation}which, for $0<\omega^\star<1$, is satisfied only at $\lambda^\star=0$.

Plugging in  $\lambda^\star=0$ into equation \eqref{eq:cond}, we see that this solution is feasible only when 
\begin{equation}\label{eq:condition}
\mathcal{R}_0(1)-\tr {H_0}\Omega_{\Sigma}=\tr {H_0} \Sigma_1-\tr {H_0}\Omega_\Sigma> 0,	
\end{equation}
else the solution $\omega^\star=\omega^\star(0)\le0$, violates the constraint $\omega^\star\ge0$. Therefore, when \eqref{eq:condition} is satisfied we have $$\omega^\star=\omega^\star(0)=\frac{\mathcal{R}_0(1)-\tr {H_0}\Omega_{\Sigma}}{\mathcal{R}_0(S_1)+\mathcal{R}_0(S_2)-2\tr {H_0}\Omega_\Sigma}\equiv \frac{\tr {H_0}(\Sigma_1-\Omega_\Sigma)}{\tr {H_0}(\Sigma_1+\Sigma_2-2\Omega_\Sigma)+\tau_2^\top Q_2^\top {H_0}Q_2\tau_2},$$which yields the first claimed solution.

Consider that the condition in \eqref{eq:condition} is violated. Then, for $C= [\mathcal{R}_0(1)-\tr {H_0}\Omega_\Sigma]\le0$,
and 
\begin{flalign*}
\mathcal{R}_0(\omega)&=\mathcal{R}_0(1)-2\omega {[\mathcal{R}_0(1)-\tr {H_0}\Omega_\Sigma]}+\omega^2\left[\mathcal{R}_0(S_1)+\mathcal{R}_0(S_2)-2\tr {H_0}\Omega_\Sigma\right]\\&=\mathcal{R}_0(1)+\omega\underbrace{[-2\omega C]}_{>0}+\omega^2\underbrace{\mathcal{J}}_{>0}.	
\end{flalign*}From the above, we see that $\mathcal{R}_0(\omega)$ is minimized at $\omega^\star=0$, which yields the minimal asymptotic expected loss, and the second claimed solution.
\end{proof}

\begin{proof}[Proof of Lemma 2]
It follows directly from Lemma 3, and Lemma 1 that $\mathcal{R}_0(\widehat\omega)=\mathcal{R}_0(\omega_0)+o_p(1)$. Now, recall that 
\begin{flalign*}
\mathcal{R}_0(\omega)&=(1-\omega)^2\tr {H_0}\Sigma_1+\omega^2[\tr {H_0}\Sigma_2]+2\omega(1-\omega) \tr {H_0}\Omega_\Sigma\\&\equiv(1-\omega)^2\mathcal{R}_0(1)+\omega^2\mathcal{R}_0(2)+2\omega(1-\omega) \tr {H_0}\Omega_\Sigma.
\end{flalign*}Under $\omega=\omega_0$, and for $C_0=\mathcal{R}_0(1)+\mathcal{R}_0(2)$, we can rewrite the above as 
\begin{flalign}
\mathcal{R}_0(\omega_0)&=\frac{\mathcal{R}_0(2)^2}{C_0^2}\mathcal{R}_0(1)+\frac{\mathcal{R}_0(1)^2}{C^2_0}\mathcal{R}_0(2)+2\frac{\mathcal{R}_0(1)\mathcal{R}_0(2)}{C_0^2}\tr {H_0}\Omega_\Sigma\label{eq:naivesopt}.
\end{flalign}

Consider that $\min\{\mathcal{R}_0(1),\mathcal{R}_0(2)\}=\mathcal{R}_0(1)$, and using equation \eqref{eq:naivesopt} to rewrite $\mathcal{R}_0(w_0)-\mathcal{R}_0(1)$ as
\begin{flalign*}
\mathcal{R}_0(w_0)-\mathcal{R}_0(1)&=\frac{\mathcal{R}_0(1)}{C_0^2}\left[\mathcal{R}_0(2)^2+\mathcal{R}_0(1)\mathcal{R}_0(2)+2\mathcal{R}_0(2)\tr {H_0}\Omega_\Sigma-\{\mathcal{R}_0(1)+\mathcal{R}_0(2)\}^2\right].
\end{flalign*}
Using the definition of $C_0^2$ and completing the square, we see that
\begin{flalign}
\mathcal{R}_0(w_0)-\mathcal{R}_0(1)&=\frac{\mathcal{R}_0(1)}{C_0^2}\left[C_0^2-\mathcal{R}_0(1)^2-\mathcal{R}_0(1)\mathcal{R}_0(2)+2\mathcal{R}_0(2)\tr {H_0}\Omega_\Sigma-C_0^2\right]\nonumber\\&=-\frac{\mathcal{R}_0(1)}{C_0^2}\left[\mathcal{R}_0(1)^2+\mathcal{R}_0(1)\mathcal{R}_0(2)-2\mathcal{R}_0(2)\tr H_0\Omega_\Sigma\right].\label{eq:naivesopt2}
\end{flalign}	Since $\tr H_0\Omega_\Sigma<0$, the RHS of the above is negative and $\mathcal{R}_0(w_0)-\mathcal{R}_0(1)\le0$. 

In the case where 	$\min\{\mathcal{R}_0(1),\mathcal{R}_0(2)\}=\mathcal{R}_0(2)$, repeating the above steps yields 
\begin{flalign*}
\mathcal{R}_0(w_0)-\mathcal{R}_0(2)&=-\frac{\mathcal{R}_0(2)}{C_0^2}\left[\mathcal{R}_0(2)^2+\mathcal{R}_0(1)\mathcal{R}_0(2)-2\mathcal{R}_0(1)\tr H_0\Omega_\Sigma\right].
\end{flalign*}	Thus, a sufficient condition for $\mathcal{R}_0(w_0)-\mathcal{R}_0(2)$ is that $\left[\mathcal{R}_0(2)-2\tr H_0\Omega_\Sigma\right]\ge0$. Hence, so long as $\tr H_0\Omega_\Sigma\le \frac{1}{2}\min\{\mathcal{R}_0(1),\mathcal{R}_0(2)\}$ the result follows. 

\end{proof}

\end{document}